\documentclass[aps,10pt,prx,nofootinbib,notitlepage]{revtex4-2}
\usepackage{header}

\begin{document}
\title{Locality and filter design for dissipative ground-state preparation}

\author{Samuel~J.~Elman}
\affiliation{Centre for Quantum Software and Information, School of Computer Science, Faculty of Engineering and Information Technology, University of Technology Sydney, NSW 2007, Australia}

\date{\today}

\begin{abstract}
    Ground-state preparation is a fundamental challenge in quantum many-body physics, chemistry and materials science. Dissipative state-preparation techniques engineer a Lindbladian coupling to a bath that drives the system to its ground state, needing no initial overlap. Their cost is dominated by the Hamiltonian simulations inside the filtered jump operators, since each jump is a weighted sum of Heisenberg-evolved copies, or branches, of a coupling operator. We reduce this cost by following three simple observations to their logical end. First, on a fixed time grid, linear programming optimises the branch weights, trading the heating response against the circuit cost. Second, Lieb--Robinson bounds let each branch evolve on a patch whose radius grows with its duration. Third, if the dynamics converges quickly and truncation changes the jumps only slightly, the stationary state stays near the ground state even on patches smaller than the light cone.
    We validate our approach numerically. On a grid of $81$ evolution times, the optimised filter reduces the worst-case heating response, or leakage, more than $116$-fold at equal cost. Across fifteen molecules and six chains, the cost falls by 27 to 56\% at matched design-grid leakage; at matched cost, the molecular leakage falls by a factor of $2.9$ to $2\times10^{3}$. Density-matrix simulations show high ground-state fidelity for both filters, and on Ising and Hubbard chains the cheaper filter keeps a ground-state population of at least $0.89$ on patches of three or four sites. The same patches fail on a Heisenberg chain whose truncation error falls more slowly with patch size and whose Lindbladian gap is 12 times smaller than the Ising chain's, pointing to a role for the gap in how far patches can be truncated. At fixed radius the per-branch cost becomes independent of system size, reducing the cost of dissipative state preparation.
\end{abstract}
\maketitle

\section{Introduction}
\label{sec:intro}

Preparing the ground state of a many-body quantum Hamiltonian is among the oldest and most consequential targets of quantum computation~\cite{AbramsLloyd1999,PoulinWocjan2009,GeTuraCirac2019}.
Ground-state preparation is the entry point to many quantum-chemistry and materials algorithms, where quantum computers are expected to provide the most utility~\cite{Lee2023evidence}, and it is the core primitive of combinatorial optimisation recast as energy minimisation~\cite{FarhiAdiabatic2000}.
Ground-state preparation is also where alleged quantum advantage may be most fragile: in the worst case, even finding the ground energy of a local~\cite{KitaevShenVyalyi2002,KempeKitaevRegev2006} or electronic structure~\cite{SchuchVerstraete2009,OGorman2022} Hamiltonian is hard for the class QMA. Despite this, much effort has gone into algorithms for the physically motivated instances in which one either possesses a good ansatz for the ground state~\cite{LinTong2020,DongLinTong2022} or can afford to cool the system slowly~\cite{FarhiAdiabatic2000,AlbashLidar2018}.

Within the fault-tolerant literature there are, broadly speaking, two classes of ground-state preparation algorithms. Coherent approaches, such as discrete adiabatic simulation~\cite{FarhiAdiabatic2000,AlbashLidar2018,AnCostaBerry2025,Costa2022discrete}, quantum phase estimation~\cite{Kitaev1995,AbramsLloyd1999,Babbush2018}, and eigenspace filtering, whether by polynomial transformations of a block encoding~\cite{LinTong2020,LinTong2020filter} or of the time-evolution unitary~\cite{DongLinTong2022,LinTong2022heisenberg}, require a promise on the initial state such as a non-negligible overlap with the target, or an efficient preparation along a gapped path to the desired state, neither of which can be guaranteed for strongly correlated systems~\cite{Lee2023evidence}.

Dissipative approaches engineer open-system dynamics whose unique fixed point is the ground state~\cite{Diehl2008,Verstraete2009}, a class of dynamics that a quantum computer can itself simulate efficiently~\cite{Kliesch2011}. The concept derives from quantum Gibbs sampling, which has a long history in quantum thermodynamics~\cite{Temme2011,KastoryanoBrandao2016} and which recent algorithmic work~\cite{RallWangWocjan2023,WocjanTemme2023} has brought to the fore. Chen, Kastoryano, Brand\~{a}o and Gily\'{e}n constructed Lindbladians that prepare thermal states under approximate detailed balance~\cite{Chen2023thermal}, made exactly detailed balanced for arbitrary noncommuting Hamiltonians by Chen, Kastoryano and Gily\'{e}n~\cite{Chen2023gibbs}, with subsequent generalisations of the Glauber and Metropolis dynamics~\cite{Gilyen2024glauber} and of the Kubo--Martin--Schwinger condition~\cite{DingLiLin2025kms}; Cubitt gave a dissipative eigensolver whose fixed point is the ground state of an arbitrary Hamiltonian~\cite{Cubitt2023dqe}; and Ding, Chen and Lin constructed a single-ancilla dynamics for which the ground state of a general gapped Hamiltonian is a fixed point and which moves population into the ground space even from a zero-overlap input~\cite{DingChenLin2024}, with the underlying Lindblad dynamics realisable by general-purpose simulation methods~\cite{DingLiLin2024}.

Reaching the ground state from every initial state takes more: the couplings must, between them, connect every state to the ground state, or the rate of convergence must be bounded separately. For Hamiltonians that conserve the number of excitations, with couplings that only remove excitations, a condition on the interaction graph known as zero forcing is enough, and the dissipation can then act on a single site~\cite{BeerBurgarth2026}. These developments bypass the central limitation of phase estimation, the need for a non-negligible initial overlap~\cite{DingChenLin2024}, and admit end-to-end efficiency guarantees~\cite{DingZhanPreskillLin2025}. Reducing the ancilla register to a single qubit makes the protocol appear particularly attractive for early fault-tolerant algorithms, where logical width is the scarcer resource.

Theoretical advances are also identifying a growing class of systems for which these algorithms ought to be efficient: the mixing time is logarithmic in the system size for weakly interacting spin and fermionic models~\cite{Zhan2025}, rapid thermalisation has been proved for commuting spin chains at any temperature~\cite{Bardet2023} and for general local Hamiltonians above a threshold temperature~\cite{RouzeFrancaAlhambra2024}, and finite-patch versions of the Gibbs sampler mix rapidly at high temperature~\cite{Hahn2026}. Rapid mixing does not by itself imply a quantum advantage, since the same weakly interacting regime is where cluster expansions give classical algorithms, at low temperature~\cite{HelmuthMann2023}, at high temperature~\cite{MannMinko2024} and, more recently, at arbitrary temperature~\cite{WaiteMann2026}; the regimes of interest are those where a fast-mixing dissipative dynamics exists and no such expansion converges. The example systems in this paper are deliberately classically tractable, so that the dissipative dynamics can be checked exactly. The construction is now being carried into applications: to excited states, by reformulating them as effective ground states~\cite{LiLin2025excited}; to chemically difficult transition-state geometries, by continuing the ground state dissipatively along a reaction coordinate~\cite{Watts2026continuation}; and to deciding which phase a system is in, by reading phase-sensitive observables from the early-time cooling dynamics without preparing the ground state at all~\cite{LiYangLin2026phase}.

The central object of all of these dissipative constructions is a set of \emph{filtered jump operators}: a local coupling operator $A_a$, which drives transitions between Hamiltonian eigenstates, dressed with an energy filter, so that the jump lowers the energy and does not raise it. This filter is realised via Fourier decomposition: evolving the coupling under the Hamiltonian $H$ for a set of times $(t_j)_j$ and combining the results, which we call branches, with weights $(c_j)_j$. These time evolutions account for most of the gate cost of the dissipative primitive. Every evolution is a full Hamiltonian simulation, and the filter of Ref.~\cite{DingChenLin2024}, which we call the Gevrey filter after the function class its convergence proof assumes, calls for tens to thousands of them per jump. The single-ancilla construction saves width, meaning logical qubits, and pays for it in gate count. When compiled under block-encoding conventions similar to those of the coherent approaches, such as Refs.~\cite{Babbush2018,LowChuang2019,Harrigan2024}, the costs land many orders of magnitude above the coherent estimates for comparable systems~\cite{KanSymons2025,Lee2021}. The narrow register is bought at a price in non-Clifford gates that is difficult to justify for near-term fault-tolerance.

In this paper, we present a construction that reduces the cost of these filtered jumps. In particular, we price one block encoding of one filtered jump; the costs of the outer loop that simulates the Lindblad dynamics, the dissipative channel, its splitting across jumps, ancilla resets and the stopping measurement are excluded. The jumps constructed here have finite range, so the resulting Lindbladian falls within the scope of locality-exploiting simulation algorithms~\cite{Mizuta2026}, whose \emph{patching} decomposes the dissipative evolution itself into subsystem evolutions and is distinct from the patches used here, which bound the support of each branch inside a single jump. Under our cost model (\cref{app:frontier}), for the transverse-field Ising model (TFIM) on a square lattice at a fixed normalised gap, compiled with the shared-evolution circuit of \cref{sec:theory}, the Gevrey filter~\cite{DingChenLin2024} costs $\PerJumpDenseeight$ Toffolis per jump at eight by eight sites and $\PerJumpDensehundred$ at a hundred by a hundred, while the jumps constructed here, in which each evolution acts on a \emph{patch}, a ball of sites around the coupling rather than the whole lattice, cost $\PerJumpMatchThree$ regardless of the lattice size. The gains behind those numbers are set out in \cref{sec:numerics}, model by model.

Our construction is based on three observations, each of which changes one part of the current filtered-jump protocol.
First, a cheaper filter function may be obtained at the same \emph{leakage}, the filter's largest response to energy-raising transitions, by optimising the Fourier coefficients directly against the leakage and the circuit cost, in place of designing a continuous window and approximating it. Previous filters~\cite{DingChenLin2024,Watts2026continuation,Bothe2026Early} have been designed in the frequency domain and then approximated in time, so they suppress heating less than their maximum evolution time allows. Optimising in the time domain gives up a frequency response that can be written down in closed form, and returns a filter that is immediately implementable. On a fixed symmetric time grid, the optimum over the family of reweighted filters is global.

The second observation is a quantitative locality bound. For a geometrically local Hamiltonian of bounded interaction strength, the Lieb--Robinson bound~\cite{LiebRobinson1972,NachtergaeleSims2006,HastingsKoma2006} confines a Heisenberg-evolved jump operator to an effective light cone. A coupling operator evolved for a given time under a local Hamiltonian is affected only by the Hamiltonian terms within a distance set by that time and the Lieb--Robinson velocity, up to exponentially small corrections. Each branch of the filter can therefore be generated by the Hamiltonian restricted to a ball around the jump's centre, whose radius is chosen for that branch's own time. Because the light cone is set by the elapsed time, a circuit that builds the branches from shared evolution pieces can give each piece the radius that its own elapsed time requires, so the early pieces run on small patches, and the finer the pieces, the more of the light cone they cut away. For the circuit developed here, this grading keeps the same error bound as giving every branch its own radius.

The final observation is that the stationary state of the Lindblad dynamics is not displaced too dramatically when the patches are truncated at a maximum radius far below the light cone of the long branches. Adapting the neighbourhood construction of~\citet{Hahn2026} to a one-sided filter, which keeps energy-lowering transitions and suppresses energy-raising ones, and pairing it with an elementary perturbation bound for the Lindblad evolution, we show that if the Lindbladian with untruncated jumps has a known mixing rate and prefactor (a mixing pair, \cref{sec:theory}), the truncated dynamics mixes and its stationary state lies within a computable distance of the ground state, controlled by the leakage of the filter and by how much the truncation changes the Lindbladian. Stability of local dissipative dynamics under perturbation has a longer history on the lattice, where it is established for local observables under a rapid-mixing hypothesis~\cite{Cubitt2015}; the estimate used here is global, in trace norm, and needs only the mixing pair. The bound is a worst case that grows with the number of jumps, and it certifies a maximum radius below the light cone of the longest branch only when the truncated branches carry little coefficient weight. Outside that regime the evidence is numerical: density-matrix simulations of transverse-field Ising, Heisenberg and Hubbard chains of at most eight sites show the actual error far below the bound in the rapidly mixing instances, with patches of three to seven sites, and show the same patches failing on a slowly mixing Heisenberg chain.

Thus our result may be summarised as follows: the maximum evolution time is set by the gap, the radius by the time, and cost-aware jumps on small overlapping patches cool cheaply where the dynamics mixes quickly.

The rest of this manuscript is organised as follows. \Cref{sec:theory} states the construction and the results that support it, with proofs deferred to the appendices: the two-stage optimisation of the filter in the time domain and the limit on how far the coefficients move from the reference, which keeps the dynamics convergent; the branch-adaptive locality bound, with a compilation of the jump that shares evolutions between branches and the radius schedule it admits; and the conditional stationary-state bound for the dynamics truncated at a maximum radius. \Cref{sec:numerics} reports the numerical evidence: filter designs and cost reductions for the molecules on which the Gevrey filter has been used~\cite{DingChenLin2024,LiZhanLin2025ab}, Lindblad simulations of the chains, and the compiled cost of the design optimisations acting together, under the fault-tolerant cost model of \cref{app:frontier}, which follows the block-encoding conventions of Refs.~\cite{Babbush2018,LowChuang2019,Harrigan2024} and which also costs the two-dimensional TFIM. \Cref{sec:outro} summarises our findings, places them in context and outlines future work.

\section{Construction and guarantees}
\label{sec:theory}

\emph{Setting.}
A Hamiltonian $H = \sum_{X \in E} h_X$ is the Hermitian sum of weighted terms on an interaction hypergraph $(\mathcal{V}, E)$ with $n$ qubit vertices. Each hyperedge has at most $k$ vertices, each vertex is a member of at most $\Delta_G$ hyperedges, and the norm of each term is bounded, $\norm{h_X} \le J$, where $\norm{\cdot}$ is the operator norm.

The hypergraph carries the shortest-path metric: $\dist(v,v')$ is the smallest number of steps in a sequence of vertices from $v$ to $v'$ in which each consecutive pair belongs to a common hyperedge, and $\dist(v,v')=\infty$ when no such sequence exists.
The ball of radius $R$ about a vertex $a\in \mc{V}$ is $B_R(a)\coloneqq\{v\mid \dist (v,a)\le R\}$.
Let 
\begin{equation}
    V(R) \coloneqq \max_v |B_R(v)|
\end{equation}
denote the \textit{single-vertex growth function}, which gives the maximum number of vertices in a radius-$R$ ball. On a $D$-dimensional hypercubic lattice with interaction range $r_0$, as in the numerical examples, the growth function satisfies $V(R)\leq (2r_0R+1)^D$.

For a subset of vertices $Y\subseteq \mc{V}$, define the radius-$R$ ball about $Y$ by
\begin{equation}
    B_R(Y)\coloneqq\bigcup_{y\in Y}\{v\mid \dist(v,y)\leq R\}.
\end{equation}
Since this is the union of $|Y|$ single-vertex balls, its size obeys $|B_R(Y)|\leq |Y|\,V(R)$. When $Y$ is itself a ball of radius $r_S$ about a vertex $a$, we write $S_a=B_{r_S}(a)$, and call a ball $B_R(S_a)$ about it a \textit{patch}.
We denote the restriction of the Hamiltonian to terms whose support lies \textit{within} a region $Y\subseteq \mc{V}$ as $H_Y \coloneqq \sum_{X \subseteq Y} h_X$.

The dynamics are defined relative to a \emph{working space}, a symmetry sector selected for the problem.
We define $\Pi_{\mathrm S}$ as the orthogonal projector onto this sector, so that for all states $\rho$ in the working space, $\rho=\Pi_{\mathrm S}\rho \Pi_{\mathrm S}$.
We require $[\Pi_{\mathrm S},H]=0$, so the Hamiltonian preserves the working space, and require each jump operator defined below to commute with $\Pi_{\mathrm S}$ as well.
Every norm below is taken on this space.
For a bounded operator $X$ on the working space, $\tnorm{X}=\Tr\sqrt{X^\dagger X}$ is the trace norm, and for a linear map $\mc{M}$ on such operators the induced trace norm is $\indnorm{\calM}=\sup_{\tnorm{X}\le1}\tnorm{\calM(X)}$.
The trace distance between states is half the trace norm of their difference.
For the molecules studied, the working space has fixed electron number and spin sector; for the Hubbard chain, it is the sector with two electrons of each spin, $N_\uparrow=N_\downarrow=2$; for the spin chains, it is the full Hilbert space.
On the working space, we assume that $H$ has a unique ground state, $\Pi_0=\ket{\psi_0}\bra{\psi_0}$, and write $H\ket{\psi_k}=E_k\ket{\psi_k}$ with $E_0<E_1\leq\cdots$. We write $\Delta_{\rm spec}=E_1-E_0$ for its spectral gap and use a design gap $0<\Delta\leq\Delta_{\rm spec}$.
Bounded degree, hyperedge size and term norms give $H$ a Lieb--Robinson bound~\cite{LiebRobinson1972,HastingsKoma2006,NachtergaeleSims2006}: as operators are Heisenberg evolved under $H$, their support spreads at a finite velocity $v_\LR$, of order $J k \Delta_G$, with tails outside the light cone decaying at a rate $\mu$, provided the growth function is subexponential, or exponential at a rate at most $\mu/2$.

\emph{Lindblad dynamics.}
The main object of this work is the Lindblad master equation
\begin{equation}
    \dot\rho=\mc{L}[\rho]=\underbrace{-i[H,\rho]}_{\mc{L}_H[\rho]}+\sum_{a=1}^{m}\underbrace{\Bigl(K_a\rho K_a^\dagger -\tfrac{1}{2}\{K_a^\dagger K_a,\rho\}\Bigr)}_{\mc{D}[K_a](\rho)},
    \label{ea:lind}
\end{equation}
where $\mc{L}_H$ describes the coherent evolution, which preserves the energy, and $\mc{D}[K_a]$ is the dissipator of the jump operator $K_a$, one of the $m$ jumps to be designed.
The jumps couple the energy eigenstates, transferring population between them, and can either lower or raise the energy.
In particular, starting from an energy eigenstate $\ket{\psi_{k'}}$, the rate of change of energy due to the jumps is
\begin{equation}
    \left.\frac{d}{dt}\Tr(H\rho)\right|_{\rho=\ket{\psi_{k'}}\bra{\psi_{k'}}}
    =\sum_{a=1}^{m}\sum_k
    (E_k-E_{k'})\bigl|\bra{\psi_k}K_a\ket{\psi_{k'}}\bigr|^2.
\end{equation}
A transition with $E_k-E_{k'}<0$ lowers the energy; one with $E_k-E_{k'}>0$ raises it.
We therefore filter the jumps so that they retain the transitions that cool and suppress those that add energy.

We say that $\mc{L}$ \emph{mixes} with \emph{mixing pair} $(\lambda,\kappa)$, where $\lambda>0$ and $\kappa\ge1$, if $\tnorm{e^{\mc{L}t}X}\le\kappa e^{-\lambda t}\tnorm{X}$ for every traceless Hermitian $X$ and $t\ge0$. It then has a unique stationary state $\sigma$, and every initial state is within trace distance $\varepsilon$ of $\sigma$ after time $\lambda^{-1}\log(\kappa/\varepsilon)$, which bounds the \emph{mixing time}. The dynamics mixes \emph{rapidly} when $\lambda$ does not shrink with the system size and $\kappa$ grows at most polynomially in it, so that the mixing time grows only logarithmically. The \emph{Lindbladian gap}, minus the largest real part of a nonzero eigenvalue of $\mc{L}$, bounds every mixing rate $\lambda$ from above.

\emph{Filtered jumps.}
Following Refs.~\cite{DingChenLin2024,DingZhanPreskillLin2025}, we choose $m$ localised couplings $A_1,\ldots,A_m$ with $\norm{A_a}\le1$ and $\supp(A_a)\subseteq S_a=B_{r_S}(a)$. We engineer the desired frequency selectivity by filtering each local coupling $A_a$, so that its response is suppressed at positive energy differences and large at negative ones.
For a local lattice model, $m=O(n)$; for fermionic molecular models, the couplings are hoppings between pairs of the $n_{\rm o}$ spatial orbitals, giving $m=O(n_{\rm o}^2)$. Each coupling commutes with $\Pi_{\mathrm S}$, so the filtered jumps below preserve the working space.

Each coupling contains components that shift population between energy eigenstates. Under Heisenberg evolution by the Hamiltonian $H$, a component connecting $\ket{\psi_{k'}}$ to $\ket{\psi_k}$ acquires the phase $e^{i(E_k-E_{k'})t}$. By taking a linear combination of the Heisenberg-evolved coupling at different times and with suitable weights, we therefore multiply each transition by a chosen function of its energy difference.
If the target response is $\hat h(\omega)=\int_{-\infty}^{\infty} f(t)e^{i\omega t}\,dt$, the corresponding filtered coupling is
\begin{equation}
    \int_{-\infty}^{\infty} f(t)\,e^{iHt}A_a e^{-iHt}\,dt\approx \sum_{j=-K}^{K} \tau f(t_j)\,e^{iHt_j}A_a e^{-iHt_j},
\end{equation}
where we have approximated the time integral by a finite sum on the grid $t_j=j\tau$, $|j|\le K$.
Each time-evolved operator in the sum gives one term in the filtered jump; we refer to the summands as \emph{branches}: branch $j$ evolves for time $t_j$, applies the coupling, and evolves back.
For $N=2K+1$ candidate branches, labelled $j=-K,\ldots,K$, with time nodes $\bs{t}=(t_j)$ and complex coefficients $\bs{c}=(c_j)$, the filtered jump and its frequency response are
\begin{equation}
    K_{a,N}(\bs{c})\coloneqq
    \sum_{j=-K}^{K}c_j e^{iHt_j}A_a e^{-iHt_j},
    \qquad
    h_N(\omega;\bs{c})\coloneqq\sum_{j=-K}^{K}c_j e^{i\omega t_j}.
    \label{eq:mjump}
\end{equation}
We seek cheaper instantiations of this filtered jump through careful analysis of its coefficients and times. In the energy eigenbasis,
\begin{equation}
    \bra{\psi_k}K_{a,N}(\bs{c})\ket{\psi_{k'}}
    =h_N(E_k-E_{k'};\bs{c})\bra{\psi_k}A_a\ket{\psi_{k'}}.
\end{equation}
Thus the desired response determines the amplitude of each transition induced by the local coupling.
We choose the weight of each branch to suppress upward transitions on the heating band $[\Delta,W]$ while maintaining a prescribed minimum cooling $P(e)>0$ at frequency $-e$, for $e\in[\Delta,W]$. The upper edge $W$ covers every transition that a coupling can drive out of the ground state; for the numerics below, we rescale $H$ so that its spectrum lies within $[-1,1]$, and $W=2$. The worst-case heating amplitude on the heating band is the \emph{heating leakage} $L_\Delta[h_N]\coloneqq\sup_{e\in[\Delta,W]}|h_N(e)|$.
The coefficient one-norm $\norm{\bs{c}}_1 \coloneqq \sum_j |c_j|$ bounds the jump's operator norm and sets the normalisation of its block encoding; $T \coloneqq \max_j |t_j|$ is the maximum evolution time. The engineered jumps then generate the Lindblad dynamics
\begin{equation}
    \calL_N[\rho] \coloneqq -i[H, \rho] + \sum_{a=1}^{m} \Bigl( K_{a,N} \rho K_{a,N}^\dagger - \tfrac{1}{2}\{ K_{a,N}^\dagger K_{a,N}, \rho \} \Bigr).
    \label{eq:lindblad_K}
\end{equation}

\emph{The reference filter.} The filter is the main design choice in a dissipative cooling algorithm. Although multiple ans\"atze for such filters have been proposed~\cite{Bothe2026Early,Watts2026continuation}, to the best of our knowledge, the cheapest previously proposed filter for the same leakage is the \textit{Gevrey class filter} defined in Ref.~\cite{DingChenLin2024}. We therefore use it as the reference, and optimise the implementation by reweighting its coefficients.

\citet{DingChenLin2024} start from a target response in the frequency domain, defined by the smooth \emph{window} function
\begin{equation}
    \hat{h}_{\eref}(\omega) \;=\; \tfrac{1}{2}\left[\erf\left(\tfrac{\omega + \omega_{\rm w}}{\delta_{\rm w}}\right) - \erf\left(\tfrac{\omega + \Delta}{\Delta}\right)\right],
    \qquad (\omega_{\rm w}, \delta_{\rm w}) = (2.5, 0.5),
    \label{eq:window}
\end{equation}
that is close to one in the cooling regime, at negative $\omega$, and close to zero in the heating regime, at positive $\omega$.
The time integral is then discretised with the trapezoidal rule on the grid $t_j=j\tau$, $|j|\le K$. The coefficients are chosen to be $d_j=\tau f(t_j)$, halving the two endpoint coefficients, where $f(t)$ is the continuous Fourier inverse of $\hat{h}_{\eref}(\omega)$, which has a closed form; we write $h_{\eref,N}(\omega)\coloneqq h_N(\omega;\bs d)$ for the resulting truncated response.
These \emph{Gevrey coefficients} obey the constraint $d_{-j} = d_j^{*}$.
In \cref{sec:numerics}, the molecular and chain prescription uses $T_\mrm{target}=5/\Delta$, $\tau=\pi/5$, $K=\operatorname{round}(T_{\mrm{target}}/\tau)$ and realised $T=K\tau$. 
\subsection{Fixed-Time Linear Program}

Our first contribution is the observation that, with the times fixed and the coefficients of $t_j$ and $-t_j$ paired, the filter design is a linear program, so its worst-case heating response can be minimised exactly.
Filters designed in the frequency domain and then approximated suffer uncontrolled leakage from the finite truncation, whereas designing in the time domain lets us minimise the leakage directly.
This construction allows two approaches to filter design: the first minimises the leakage at fixed cost; the second fixes the leakage and minimises the cost.

Branches with negligible reference coefficients, $d_j\approx0$, are removed first by fixing their optimised coefficients $c_j$ to zero; all sums and constraints below run over the remaining branches, while $N$ still counts every candidate label.
The \emph{paired-reweighting family} is
\begin{equation}
    c_j(\bs{x}) \coloneqq x_j d_j, \qquad x_{-j} = x_j \in \mbb{R},
    \qquad\text{such that}\qquad
    h_N(e; \bs{c}(\bs{x})) = x_0 d_0 + \sum_{\mrm{pairs}\ j>0} x_j\, \varphi_j(e),
    \label{eq:mfamily}
\end{equation}
where $\varphi_j(e) \coloneqq 2|d_j| \cos(e t_j + \arg d_j)$ for $j>0$, a real series for which the times $t_j$ are the fixed frequencies of the basis functions and the amplitudes $x_j$ are the variables, and the self-paired zero-time term $x_0 d_0$ ($d_0$ real) enters once. The displacement $\eta(\bs{c},\bs{d}) \coloneqq \sum_j |c_j - d_j|$ measures how far the coefficient vector has moved from the Gevrey construction~\cite{DingChenLin2024}, while the branch cost $W_j(R_j, t_j) \ge 0$ is the compiled cost of executing branch $j$ once on a patch of radius $R_j$; a binary $z_j \in \{0,1\}$ records whether branch $j$ is active, through $|x_j| \le M_j z_j$ with $M_j$ a valid upper bound on $|x_j|$; the active branches form the \emph{support} of the filter.

\emph{The filter as a linear program.}
A linear program minimises or maximises a linear function of real variables subject to linear equality and inequality constraints. A mixed-integer linear program is a linear program in which some variables must take integer values, such as the binary indicators $z_j\in\{0,1\}$ here. With the time nodes fixed, the paired response is real and affine in the amplitudes. On a finite sample, its largest absolute value is represented by an upper-bound variable constrained above both the response and its negative at every sample point. The cooling constraints are affine, and the coefficient one-norm becomes linear when each amplitude is split into its positive and negative parts. Thus the spectral stage below is a linear program.

\begin{proposition}[The two-stage program]
\label{thm:twostage}
    Fix a finite sample $\mathcal E \subset [\Delta, W]$, the design grid, and a coefficient budget $\Lambda_{\max}$, and define the sampled heating leakage $L_{\mathcal E}[h] \coloneqq \max_{e \in \mathcal E} |h(e)|$. Let $\mathcal{F}$, assumed nonempty, be the paired amplitudes $\bs{x}$ with $\norm{\bs{c}(\bs{x})}_1 \le \Lambda_{\max}$ and $h_N(-e; \bs{c}(\bs{x})) \ge P(e)$ on $\mathcal E$. The spectral stage
    \begin{equation}
        L^\star_{\mathcal E} \;\coloneqq\; \min_{\bs{x} \in \mathcal{F}}\; L_{\mathcal E}\bigl[ h_N(\cdot\,; \bs{c}(\bs{x})) \bigr]
        \label{eq:stage1}
    \end{equation}
    is a linear program with a global minimiser $\bs{c}_{\mathrm{spec}}$. For costs $W_j \ge 0$, an allowance $\xi \ge 0$ and any $\eta_{\sup} \ge \eta(\bs{c}_{\mathrm{spec}}, \bs{d})$, the support stage
    \begin{equation}
        C^{\star}_{\mathrm{br}}(\xi) \coloneqq \min_{\bs{x} \in \mathcal{F},\, \bs{z}}\; \sum_{j} W_j z_j
        \quad\text{subject to}\quad
        L_{\mathcal E}\bigl[ h_N(\cdot\,; \bs{c}(\bs{x})) \bigr] \le (1+\xi)\,L^\star_{\mathcal E},
        \quad
        \eta(\bs{c}(\bs{x}), \bs{d}) \le \eta_{\sup},
        \quad
        |x_j| \le M_j z_j,
    \label{eq:stage2}
    \end{equation}
    is a mixed-integer linear program, whose minimiser is the cheapest filter in the family meeting both allowances.
\end{proposition}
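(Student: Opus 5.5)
The argument is a reformulation followed by a compactness check, carried out in the order the statement lists its claims.

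\emph{Spectral stage.} Under the paired parametrisation \cref{eq:mfamily}, the response $h_N(e;\bs c(\bs x))=x_0d_0+\sum_{j>0}x_j\varphi_j(e)$ is real and affine in $\bs x$ at every real $e$, including the cooling points $-e$. So each cooling constraint $h_N(-e;\bs c(\bs x))\ge P(e)$, $e\in\mathcal E$, is one linear inequality.

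For the one-norm, note that $|c_j|=|x_j||d_j|$. Splitting $x_j=x_j^+-x_j^-$ with $x_j^\pm\ge0$ turns $\norm{\bs c}_1\le\Lambda_{\max}$ into the linear constraint $\sum_j|d_j|(x_j^++x_j^-)\le\Lambda_{\max}$. The paired terms count twice, since $|d_{-j}|=|d_j|$.

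The objective $L_{\mathcal E}$ is a maximum of finitely many absolute values. I would use an epigraph variable $s$ with $-s\le h_N(e;\bs c(\bs x))\le s$ for all $e\in\mathcal E$, and minimise $s$. This gives a linear program in $(\bs x^+,\bs x^-,s)$, with finitely many variables since the grid is finite.

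Existence of a global minimiser is the only non-routine point.
\begin{itemize}
\item The feasible set is nonempty by assumption and is a closed polyhedron.
\item It is bounded in $\bs x$ on the retained branches: each has $d_j\neq0$, so $|x_j|\le\Lambda_{\max}/|d_j|$.
\item The objective $s$ is bounded below by $0$.
\end{itemize}
Either Weierstrass on the compact set $\mathcal F$, using continuity of $L_{\mathcal E}$, or the fundamental theorem of linear programming then gives an attained optimum $\bs c_{\rm spec}$. It is global because the problem is convex.

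\emph{Support stage.} First linearise each constraint.
\begin{itemize}
\item The leakage constraint becomes $-(1+\xi)L^\star_{\mathcal E}\le h_N(e)\le(1+\xi)L^\star_{\mathcal E}$ for $e\in\mathcal E$, since $L^\star_{\mathcal E}$ is now a constant.
\item The displacement is $\eta=\sum_j|d_j||x_j-1|$. Introduce $u_j\ge\pm(x_j-1)$ and impose $\sum_j|d_j|u_j\le\eta_{\sup}$. This relaxation is exact, because any feasible $\bs x$ admits $u_j=|x_j-1|$.
\item Similarly $|x_j|\le M_jz_j$ becomes $\pm x_j\le M_jz_j$ with $z_j\in\{0,1\}$.
\end{itemize}
The objective $\sum_jW_jz_j$ is linear. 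Together these make the problem a mixed-integer linear program.

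Attainment: the point $\bs x_{\rm spec}$ with $z_j=1$ is feasible. It uses $\xi\ge0$ and $\eta_{\sup}\ge\eta(\bs c_{\rm spec},\bs d)$, and needs $M_j\ge|x_j^{\rm spec}|$, which holds because $M_j$ is a valid bound. There are finitely many $\bs z$. For each $\bs z$ the continuous feasible set is compact, so feasibility is decided exactly and the minimum over the finitely many feasible $\bs z$ is attained.

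Finally, the minimiser is the cheapest filter in the family meeting both allowances. The MILP feasible set projects exactly onto the paired amplitudes satisfying the allowances. Given such an $\bs x$, the cheapest valid $\bs z$ sets $z_j=1$ exactly on $\operatorname{supp}\bs x$, because $W_j\ge0$. So the optimum equals the minimal support cost.

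The main obstacle I expect is the $M_j$ step. The equivalence needs $M_j$ to bound every feasible $|x_j|$, and $\Lambda_{\max}/|d_j|$ is the natural choice. I would state this choice explicitly, which also justifies discarding branches with $d_j\approx0$ first.
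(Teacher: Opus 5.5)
Your proposal is correct and follows the same route as the paper's proof. Both use an epigraph variable for the sampled leakage, split $\bs x$ into positive and negative parts to make the one-norm budget linear, and get compactness from $d_j\neq0$ on the retained branches; for the support stage, both use the big-$M$ bound $M_j=\Lambda_{\max}/|d_j|$ and the feasibility of $\bs c_{\rm spec}$ with all its branches active. The one difference is that your auxiliary variables $u_j\ge\pm(x_j-1)$ linearise the displacement cap more explicitly than the paper's remark that it ``is linear after the same splitting''; the argument is otherwise the same.
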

The proof is provided in \cref{app:obs1}. The least heating leakage any filter in the family can reach is the value of a linear program, and the cheapest filter that stays within a stated error on that leakage solves an integer program over which branches are kept. Treating the times as data allows us to find a global optimum; the support stage deletes branches without moving them.
Imposing the constraints on the whole band $[\Delta,W]$ gives infinitely many constraints, one for each energy. The second-derivative bounds in \cref{app:obs1} replace them by finitely many sufficient conditions; optimality for this whole-band problem requires a separate argument.
Under the ladder compilation of \cref{eq:costmain} used in \cref{sec:numerics}, and at a fixed value of its accounting multiplier $A_{\rm norm}$, the cost depends on the support only through its longest active branch, so the cheapest feasible support is found by enumerating truncations rather than by solving \cref{eq:stage2} as a general integer program.

\begin{figure*}[t]
    \includegraphics[width=\textwidth]{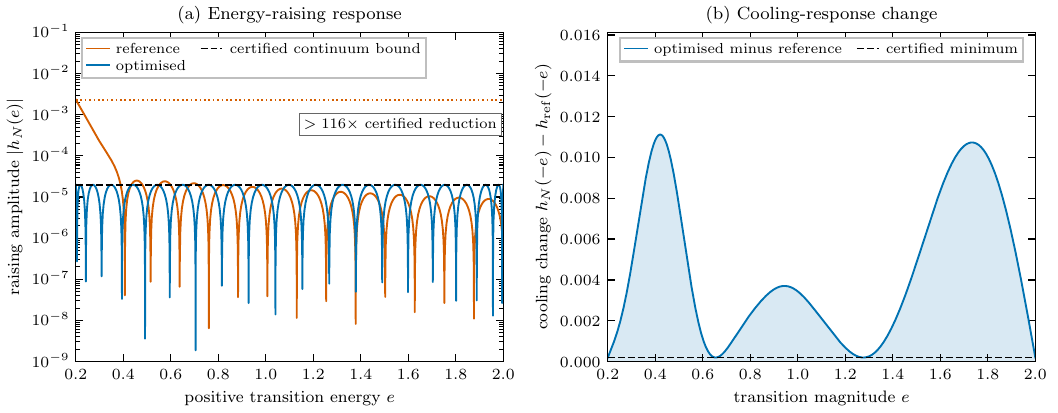}
    \caption{Finite-filter improvement at $\Delta=0.2$, $W=2$, on $N=81$ nodes $t_j=j\pi/5$, $|j|\le40$, with $T=8\pi$ and coefficient one-norm at most $1.470919199$. (a)~Heating response $h_{\rm opt}$ of the optimised filter and its certified continuum bound, which holds at every energy in the band (\cref{app:obs1}), $L_\Delta[h_{\rm opt}] < 1.9881\times10^{-5}$, against the Gevrey response at the gap (dotted), above $2.3073\times10^{-3}$: an improvement greater than $116$. (b)~Difference between the optimised and Gevrey cooling responses; its certified lower bound between samples stays positive, so the optimised filter cools at least as strongly as the Gevrey filter.}
    \label{fig:certificate}
\end{figure*}

\emph{A same-resource certificate.} Before using the support stage, we test the spectral stage on its own against the Gevrey filter with identical resources. On $81$ trapezoidal nodes $t_j = j \pi/5$ for $|j|\le 40$ at $\norm{H}=1$ and $\Delta=0.2$, reweighting the Gevrey coefficients of Ref.~\cite{DingChenLin2024} without increasing their one-norm keeps the cooling at least as strong throughout $[0.2,2]$ and gives $L_\Delta[h_N]<1.9881\times10^{-5}$ on the whole band, compared with the Gevrey value $|h_{\eref,N}(0.2)| > 2.3073 \times 10^{-3}$, an improvement greater than $116$ once the gaps between samples are closed with the second-derivative remainder of \cref{app:obs1}. \Cref{fig:certificate}(a) shows the suppression of the energy-raising transitions, with the certified continuum bound just above the sampled curve, and \cref{fig:certificate}(b) the positive cooling margin.

The optimiser is restricted so that at no point is the new filter able to weaken the cooling response: its cooling floor $P(e)$, the minimum allowed response at $-e$, is the Gevrey filter's own cooling response $h_{\eref,N}(-e)$, raised by $2\times10^{-4}$ in the certificate and the per-model programs (\cref{app:obs1,app:numerics}).
The nodes and the maximum time $T$ are those of the Gevrey filter throughout. The match filters of \cref{sec:numerics}, which minimise the cost at fixed leakage, are constrained only by the Gevrey filter's worst-case leakage, so at an individual transition energy their response can exceed the Gevrey filter's. The Gevrey response is largest at the gap and falls off rapidly above it, while an optimised filter spreads its leakage across the whole band. At transitions well above the gap, where the Gevrey tail has already fallen away, a match filter can therefore heat more strongly. When those transitions set the stationary state, the match filter's stationary infidelity exceeds the Gevrey filter's, as on the eight-site chain of \cref{fig:convergence}(a).

The maximum time cannot be made arbitrarily short. Bernstein's inequality~\cite{Bernstein1912,Boas1954} limits the slope of the response to $T\sup_{\mathbb R}|h_N|$. Across $[-\Delta,\Delta]$ the response must fall by at least the \emph{contrast} $P(\Delta)-L_\Delta[h_N]$, so $T\ge[P(\Delta)-L_\Delta[h_N]]/[2\Delta\sup_{\mathbb R}|h_N|]$, and the maximum time must grow as $1/\Delta$ when the contrast is fixed and the response stays bounded.
Refs.~\cite{DingChenLin2024, LiYangLin2026phase} use this scaling on the maximum time to design their filters.
Ref.~\cite{LiYangLin2026phase} sets the resolution, the $\Delta$ of this bound, to order one, so that the maximum time need only be of order one: a coarser filter spreads the same fall over a wider window, so a shorter maximum time, and hence a cheaper filter, suffices.
Whatever the resolution, the bound is only a lower limit, and it does not show that the leakage gains we observe are the best possible.
Hereafter $\bs{c}^\star$ denotes the selected feasible filter and unqualified jumps and responses use these coefficients.

\emph{Realisable filters.} An ideal filter would have zero response on the whole heating band, so its jumps would never excite the ground state. No nonzero finite filter achieves this: since $h_N(\omega;\bs c)=\sum_j c_j e^{i\omega t_j}$ is analytic in $\omega$ and exponentials with distinct times $t_j$ are linearly independent,
\begin{equation}
    h_N(e;\bs c)=0\ \text{ for all } e\in[\Delta,W]
    \;\Longrightarrow\;
    \bs c=0,
    \qquad\text{so}\qquad
    L_\Delta[h_N]>0\ \text{ for every } \bs c\neq0 .
\end{equation}
The actual transition energies form a finite set, on which further cancellations can occur, but in general a residual must be allowed for through the \emph{raising error}
\begin{equation}
    \eup \coloneqq \max_a \norm{K_{a,N}\Pi_0} = \max_a \norm{K_{a,N}\ket{\psi_0}},
    \label{eq:eupdef}
\end{equation}
the largest norm of a jump applied to the ground state. Its excited-state components are bounded by the heating leakage, while the ground-state component has amplitude $h_N(0)\bra{\psi_0}A_a\ket{\psi_0}$. Hence
\begin{equation}
    \eup \;\le\; L_\Delta[h_N] \;+\; |h_N(0)|\, \max_a \bigl| \bra{\psi_0} A_a \ket{\psi_0} \bigr|.
    \label{eq:eupbound}
\end{equation}

\emph{The zero-frequency term.} The zero-frequency response $h_N(0)=\sum_jc_j$ admits an additional linear constraint. The reference window has $\hat{h}_{\eref}(0)=\tfrac12[\erf(5)-\erf(1)]\approx0.079$. Suppressing this value competes with keeping the heating leakage low, particularly in the Hubbard example below. \textit{Coupling centring} removes the ground-state expectation $\bra{\psi_0}A_a\ket{\psi_0}$ that this response multiplies. For any scalar $\beta$,
\begin{equation}
    \calD[K + \beta \ident] \;=\; \calD[K] \;-\; i\bigl[H_\beta, \,\cdot\,\bigr],
    \qquad
    H_\beta \;\coloneqq\; \tfrac{i}{2}\bigl(\beta^* K - \beta K^\dagger\bigr).
    \label{eq:gauge}
\end{equation}
Thus, for any scalar $\bar A_a$, the shift $A_a \mapsto A_a - \bar A_a \ident$ changes the jump by $\beta = -\bar A_a h_N(0)$ and the generator by a coherent term alone. By \cref{eq:gauge}, the part of $K_{a,N}\ket{\psi_0}$ along $\ket{\psi_0}$, of amplitude $h_N(0)\bra{\psi_0}A_a\ket{\psi_0}$, acts like an extra Hamiltonian term, so it can shift the stationary state by an amount linear in that amplitude. The excited part, of norm at most $L_\Delta[h_N]$, moves population out of the ground state at a rate quadratic in its norm, and it also creates coherences between the ground state and the excited states that are linear in its norm. Both appear in $\sum_a\calD[K_{a,N}](\Pi_0)$, which vanishes exactly when $\Pi_0$ is stationary. Choosing $\bar A_a=\bra{\psi_0}A_a\ket{\psi_0}$ removes the ground-state part exactly but leaves these coherences, and if $\bar A_a$ is only an estimate, the ground-state amplitude $h_N(0)(\bra{\psi_0}A_a\ket{\psi_0}-\bar A_a)$ remains. In \cref{sec:numerics} we measure the effect of centring on a Hubbard chain, where the zero-frequency response is hard to suppress with the filter alone.

\emph{Maintaining the mixing.} The restriction on the displacement $\eta(\bs{c}(\bs{x}), \bs{d}) \le \eta_{\sup}$ of \cref{thm:twostage} keeps the reweighted dynamics convergent.
A generator within distance $\delta$, in the induced trace norm, of one that mixes with pair $(\lambda, \kappa)$ still mixes if $\kappa\delta<\lambda$, with rate at least $\lambda-\kappa\delta$~\cite[Sec.~III.1]{EngelNagel2000}, and its stationary state has moved by at most $\kappa\delta/\lambda$~\cite{SzehrWolf2013}. Two filters $\bs{c}, \bs{d}$ on the same nodes, with generators $\calL(\bs c)$ and $\calL(\bs d)$ of the form \cref{eq:lindblad_K}, have jumps differing by $\sum_j (c_j - d_j)\, e^{iHt_j} A_a e^{-iHt_j}$, of norm at most $\eta(\bs{c},\bs{d})$ since $\norm{A_a} \le 1$. 
The dissipators obey
\begin{equation}
    \indnorm{\calD[K] - \calD[K']} \;\le\; 2\bigl(\norm{K} + \norm{K'}\bigr)\norm{K - K'},
    \label{eq:dissdiff}
\end{equation}
and with $\norm{K_{a,N}(\bs{c})} \le \norm{\bs{c}}_1$, the coherent term common to both, and $m$ jumps,
\begin{equation}
    \indnorm{\calL(\bs{c}) - \calL(\bs{d})} \;\le\; \delta_{c,d} \;\coloneqq\; 2m\bigl(\norm{\bs{c}}_1 + \norm{\bs{d}}_1\bigr)\, \eta(\bs{c},\bs{d}) :
    \label{eq:gendist}
\end{equation}
the generator moves linearly in the displacement the optimiser controls.

\begin{proposition}[Reweighting keeps the mixing]
    \label{prop:transfer}
    If the Gevrey generator $\calL(\bs{d})$ has mixing pair $(\lambda_d, \kappa_d)$ and $\kappa_d\,\delta_{c,d} < \lambda_d$, then $\calL(\bs{c})$ has a unique stationary state $\sigma_c$, mixes with the same prefactor and with rate
    \begin{equation}
        \lambda_c \;\ge\; \lambda_d - \kappa_d\, \delta_{c,d} \;>\; 0,
        \qquad\text{and}\qquad
        \tnorm{\sigma_c - \sigma_d} \;\le\; \frac{\kappa_d}{\lambda_d}\, \delta_{c,d},
        \label{eq:ratetransfer}
    \end{equation}
    where $\sigma_d$ is the stationary state of $\calL(\bs{d})$.
\end{proposition}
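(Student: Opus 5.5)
The argument is a perturbation argument for strongly continuous semigroups: apply the Duhamel formula on the subspace of traceless Hermitian operators, then use Grönwall.

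Write $\calL_c\coloneqq\calL(\bs c)$, $\calL_d\coloneqq\calL(\bs d)$ and $\calE\coloneqq\calL_c-\calL_d$. By \cref{eq:gendist}, $\indnorm{\calE}\le\delta_{c,d}$.

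\emph{Invariant subspace.} Every Lindbladian is trace preserving and Hermiticity preserving. Hence $\calL_c$, $\calL_d$ and $\calE$ all map the real space $\mc{T}_0$ of traceless Hermitian operators on the working space into itself, and so do the exponentials of $\calL_c$ and $\calL_d$. This is the only structural fact needed to restrict the mixing hypothesis to $\mc{T}_0$: the hypothesis on $\calL_d$ is stated only for traceless Hermitian $X$, so $\calE X$ must again lie in $\mc{T}_0$ before the hypothesis can be applied to it.

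\emph{Contraction estimate.} For $X\in\mc{T}_0$, Duhamel's formula gives
\begin{equation*}
e^{\calL_c t}X=e^{\calL_d t}X+\int_0^t e^{\calL_d(t-s)}\,\calE\, e^{\calL_c s}X\,ds .
\end{equation*}
The integrand lies in $\mc{T}_0$, so the mixing pair $(\lambda_d,\kappa_d)$ applies to it. Set $u(t)\coloneqq e^{\lambda_d t}\tnorm{e^{\calL_c t}X}$. Then
\begin{equation*}
u(t)\le \kappa_d\tnorm{X}+\kappa_d\delta_{c,d}\int_0^t u(s)\,ds .
\end{equation*}
Grönwall's inequality gives $u(t)\le\kappa_d\tnorm{X}\,e^{\kappa_d\delta_{c,d}t}$, that is,
\begin{equation*}
\tnorm{e^{\calL_c t}X}\le\kappa_d\, e^{-(\lambda_d-\kappa_d\delta_{c,d})t}\tnorm{X}.
\end{equation*}
This is the mixing pair $(\lambda_d-\kappa_d\delta_{c,d},\kappa_d)$: the rate claimed, and the same prefactor. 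This is the content of \cite[Sec.~III.1]{EngelNagel2000}, written out here in finite dimension, where no domain issues arise. The rate is positive by the hypothesis $\kappa_d\delta_{c,d}<\lambda_d$.

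\emph{Unique stationary state.} Uniqueness follows from the contraction estimate, as in the definition of mixing. The channels $e^{\calL_c t}$ are CPTP on a finite-dimensional state space, so by Brouwer or Markov--Kakutani some state $\sigma_c$ is fixed by all of them, and hence $\calL_c\sigma_c=0$. Any two stationary states differ by an element of $\mc{T}_0$ that is fixed by the semigroup yet decays exponentially, so they coincide. The same reasoning applied to $\calL_d$ gives its unique stationary state $\sigma_d$.

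\emph{Stationary-state displacement.} This is the step needing care, because the mixing of $\calL_d$ controls only traceless inputs. Put $Y\coloneqq\sigma_c-\sigma_d\in\mc{T}_0$. Since $\calL_c\sigma_c=0=\calL_d\sigma_d$,
\begin{equation*}
\calL_d Y=-\calE\sigma_c .
\end{equation*}
The right-hand side is traceless Hermitian. Integrating $\frac{d}{ds}e^{\calL_d s}Y=e^{\calL_d s}\calL_d Y$ from $0$ to $t$ and letting $t\to\infty$, the mixing of $\calL_d$ sends $e^{\calL_d t}Y\to0$, and the integrand is integrable, so
\begin{equation*}
Y=\int_0^\infty e^{\calL_d s}\,\calE\sigma_c\,ds .
\end{equation*}
Therefore
\begin{equation*}
\tnorm{Y}\le\int_0^\infty\kappa_d e^{-\lambda_d s}\,ds\;\delta_{c,d}\tnorm{\sigma_c}=\frac{\kappa_d}{\lambda_d}\,\delta_{c,d},
\end{equation*}
using $\tnorm{\sigma_c}=1$. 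This is the displacement bound of \cite{SzehrWolf2013}.

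The main obstacle is checking that $\calE\sigma_c$, and every intermediate operator in the Duhamel integral, stays in the traceless Hermitian class. Without that, the hypothesis on $\calL_d$ could not be applied to these operators at all. The invariance of $\mc{T}_0$ under trace- and Hermiticity-preserving maps takes care of this.
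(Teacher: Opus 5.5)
Your proposal is correct and follows essentially the same route as the paper. Both arguments use Duhamel's formula with the reference mixing pair and then Gr\"onwall to obtain rate $\lambda_d-\kappa_d\delta_{c,d}$ with unchanged prefactor $\kappa_d$. Both then bound the displacement by applying the inverse $-\int_0^\infty e^{\calL(\bs d)t}\,\mathrm dt$ (norm at most $\kappa_d/\lambda_d$ on traceless Hermitian operators) to $\calL(\bs d)(\sigma_c-\sigma_d)=-[\calL(\bs c)-\calL(\bs d)]\sigma_c$. You additionally make explicit the existence of $\sigma_c$ and the invariance of the traceless Hermitian class, which the paper leaves implicit, while the paper derives the generator bound \cref{eq:gendist} inside the proof where you cite it.
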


The full proof is provided in \cref{app:obs1}. The mixing pair, here and in \cref{thm:stationary}, is a property of an invariant subspace of the working space: both generators compared preserve it, and uniqueness and convergence are asserted on it. Substituting \cref{eq:gendist} into the stability condition $\kappa_d \delta_{c,d} < \lambda_d$ of \cref{prop:transfer} gives
\begin{equation}
    \greq\coloneqq 2\bigl(\norm{\bs{c}}_1 + \norm{\bs{d}}_1\bigr)\, \eta(\bs{c},\bs{d}) \;<\; g_d \;\coloneqq\; \frac{\lambda_d}{\kappa_d\, m}
    \label{eq:margin}
\end{equation}
where $\greq$ bounds each reweighted jump's contribution to the change in the generator. For a fixed safety fraction $0<\gamma<1$, the sufficient \emph{linear} constraint
\begin{equation}
    \eta(\bs c,\bs d)\le\frac{\gamma g_d}{2[\Lambda_{\max}+\norm{\bs{d}}_1]}
    \label{eq:linearcapmain}
\end{equation}
ensures $\greq\le\gamma g_d$; after optimisation we evaluate $\greq$ itself, which can be smaller. The mixing margin $g_d=\lambda_d/(\kappa_dm)$ shrinks as $1/m$ for size-independent $\lambda_d,\kappa_d$.

\subsection{Grading the Radius}

Our second contribution is the observation that each branch depends, up to small corrections, only on the Hamiltonian within its own light cone. A branch evolved for time $t_j$ is generated by the Hamiltonian $H_{B_{R_j}(S_a)} \equiv H_{a,R_j}$, up to an error that the Lieb--Robinson bound makes small once the radius $R_j$ is large enough relative to $v_\LR |t_j|$. For a radius vector $\bs{R} = (R_j)$ we write
\begin{equation}
    K_{a,N}^{(\bs{R})} \coloneqq \sum_{j} c_j^\star\, e^{iH_{a,R_j} t_j} A_a e^{-iH_{a,R_j} t_j}
    \label{eq:mlocaljump}
\end{equation}
for the localised jump, and $\calL_{N,\bs R}$ for the generator with these jumps and unchanged coherent Hamiltonian: ${\mc{L}_H[\rho]=-i[H,\rho]}$.
\begin{proposition}[Branch-adaptive locality, informal]
    \label{prop:locality}
    Under the standard uniform Lieb--Robinson and growth assumptions, a constant $C_4$, fixed by the interaction bounds, graph growth, coupling-support radius $r_S$ and locality constants, bounds the \emph{spatial error} for any radii:
    \begin{equation}
        \bigl\lVert K_{a,N} - K^{(\bs{R})}_{a,N} \bigr\rVert \;\le\; \esp(\bs{R}; \bs{t}, \bs{c}^\star) \;\coloneqq\; \sum_{j} \bigl| c_j^\star \bigr| \min\Bigl\{ 2,\; C_4 |t_j|\, e^{\mu v_\LR |t_j|}\, e^{-\frac{\mu}{2}(R_j - 1)} \Bigr\},
        \label{eq:espmain}
    \end{equation}
    and the localised and global generators differ, in the induced trace norm, by at most $\delta_{N,\bs{R}} \le 4 m \norm{\bs{c}^\star}_1\, \esp$. In particular, giving every nonzero-time branch the radius
    \begin{equation}
        R_j(\ell) \;\coloneqq\; 1 + \lceil 2 v_\LR |t_j| + \ell \rceil,
        \label{eq:adaptive}
    \end{equation}
    a fixed padding $\ell$ beyond its own light cone, makes the error a single exponential in the padding,
    \begin{equation}
        \esp \;\le\; C_4\, e^{-\mu\ell/2} \sum_j |c_j^\star|\,|t_j|.
        \label{eq:adaptivemain}
    \end{equation}
\end{proposition}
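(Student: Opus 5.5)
Since the branches combine linearly, I will bound the spatial error branch by branch and then assemble the pieces. Write $A_a(t)=e^{iHt}A_ae^{-iHt}$ and $A_a^{(R)}(t)=e^{iH_{a,R}t}A_ae^{-iH_{a,R}t}$. The triangle inequality gives
\begin{equation*}
\bigl\lVert K_{a,N}-K^{(\bs R)}_{a,N}\bigr\rVert\le\sum_j|c_j^\star|\,\bigl\lVert A_a(t_j)-A_a^{(R_j)}(t_j)\bigr\rVert .
\end{equation*}
The first entry of the minimum in \cref{eq:espmain} is immediate, because both evolutions are unitary conjugations of $A_a$ and $\norm{A_a}\le1$. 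The work is in the exponential entry.

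For that entry I will use the standard interaction-picture (Duhamel) identity:
\begin{equation*}
A_a(t)-A_a^{(R)}(t)=\int_0^t e^{iHs}\,i\bigl[H-H_{a,R},\,A_a^{(R)}(t-s)\bigr]\,e^{-iHs}\,ds .
\end{equation*}
Unitary invariance then gives $\lVert A_a(t)-A_a^{(R)}(t)\rVert\le\int_0^{|t|}\lVert[H-H_{a,R},A_a^{(R)}(u)]\rVert\,du$. The commutator only sees the terms $h_X$ of $H-H_{a,R}$ that intersect $B_R(S_a)$ without lying inside it. Each such $X$ reaches the boundary shell, at distance at least $R-1$ from $S_a$ (this is where the offset $R-1$ comes from).

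The Lieb--Robinson bound for the restricted Hamiltonian $H_{a,R}$ is uniform, since restriction preserves $k$, $\Delta_G$ and $J$. It gives
\begin{equation*}
\lVert[h_X,A_a^{(R)}(u)]\rVert\le C\,J\,|S_a|\,e^{\mu v_\LR u}e^{-\mu\,\dist(X,S_a)} .
\end{equation*}
Summing over the boundary terms is the step I expect to be the main obstacle. The number of terms at distance $r\ge R-1$ is at most $\Delta_G|S_a|V(r+1)$. I will sum $\sum_{r\ge R-1}V(r+1)e^{-\mu r}$ using the growth hypothesis (subexponential growth, or exponential at rate at most $\mu/2$). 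This bounds the sum by a constant times $e^{-\mu(R-1)/2}$: we spend half the decay rate to absorb $V$, which is exactly why $\mu/2$ appears.

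Integrating over $u\in[0,|t|]$ and bounding $e^{\mu v_\LR u}\le e^{\mu v_\LR|t|}$ produces the factor $|t_j|$. Collecting $J,k,\Delta_G,|S_a|\le V(r_S)$ and the Lieb--Robinson prefactor into $C_4$ then yields \cref{eq:espmain}.

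For the generator bound, the localised jumps carry the same coefficients, so $\norm{K^{(\bs R)}_{a,N}}\le\norm{\bs c^\star}_1$ as well. I will apply \cref{eq:dissdiff} to each of the $m$ jumps, using that the coherent parts coincide. This gives $\delta_{N,\bs R}\le\sum_a 2\cdot2\norm{\bs c^\star}_1\esp=4m\norm{\bs c^\star}_1\esp$.

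Finally, for the adaptive radius \cref{eq:adaptive}, I will substitute $R_j-1\ge2v_\LR|t_j|+\ell$ into the exponential. This gives $e^{\mu v_\LR|t_j|}e^{-\frac{\mu}{2}(R_j-1)}\le e^{-\mu\ell/2}$, so the velocity growth cancels exactly. Dropping the minimum in favour of its second entry and summing over $j$ gives \cref{eq:adaptivemain}. The $t_j=0$ branch contributes nothing, since its two evolutions coincide.
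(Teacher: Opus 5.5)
Your proposal is correct and follows essentially the same route as the paper: a branchwise triangle inequality, then the Duhamel comparison of full and restricted Heisenberg evolutions, then the uniform Lieb--Robinson bound on the omitted terms with the growth condition absorbing the shell count at half the decay rate, then \cref{eq:dissdiff} for the generator, and finally the cancellation $e^{\mu v_\LR|t_j|}e^{-\mu(R_j-1)/2}\le e^{-\mu\ell/2}$ for the adaptive schedule. Two details differ slightly without affecting the result: you keep only the terms straddling the patch boundary, using that $A_a^{(R)}(u)$ is supported in $B_R(S_a)$, which is a mild sharpening of the paper's sum over all $X\nsubseteq B_R(S_a)$; and your offset $R-1$ is looser than the available $\dist(X,S_a)\ge R$.
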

\Cref{app:obs2} provides the proof of the bound. The \emph{light-cone schedule} $\bs R(\ell)$ has maximum radius $R(T)\coloneqq 1+\lceil2v_\LR T+\ell\rceil$. At fixed padding, its radii are determined before coefficient optimisation, so removing long branches also removes expensive patches.

\begin{figure}[b]
    \centering
    \includegraphics[width=0.35\linewidth]{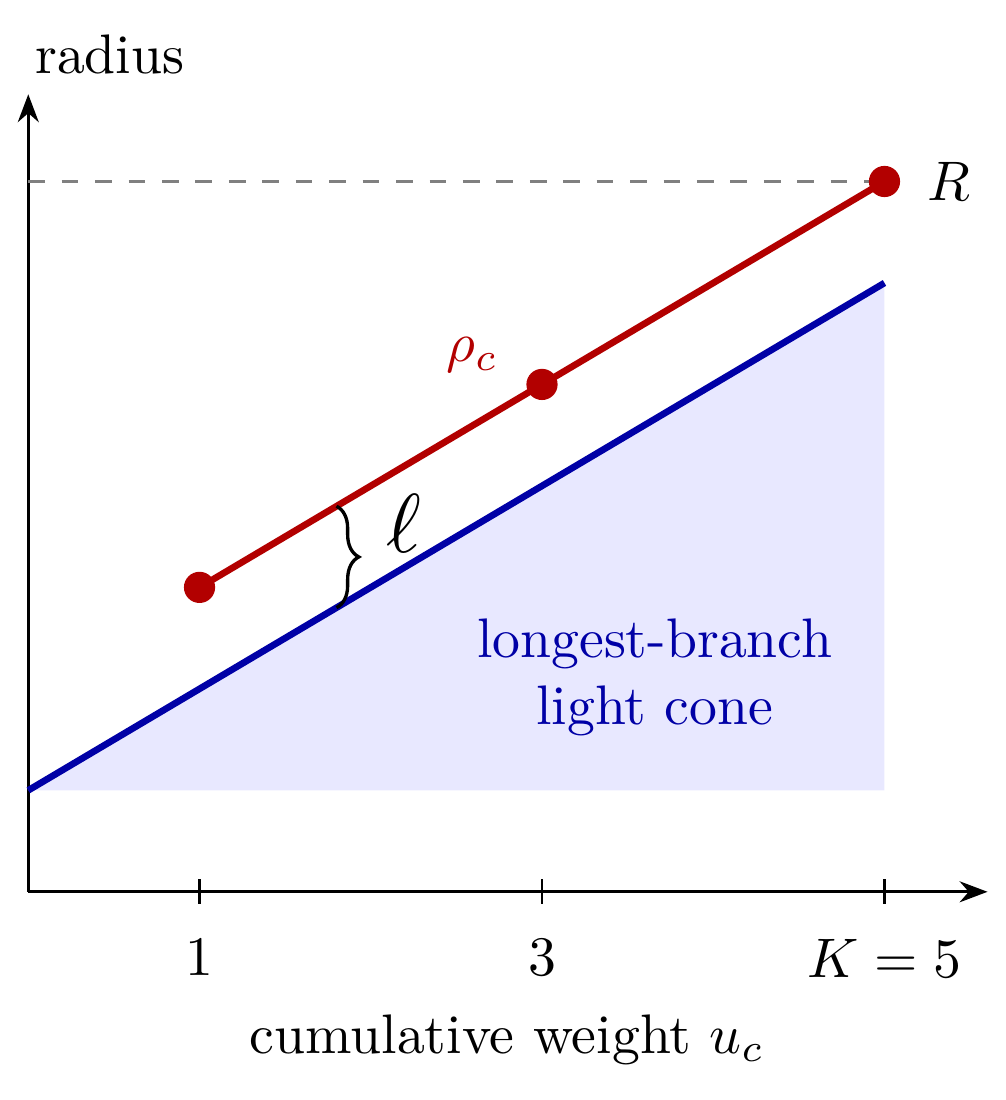}
    \caption{Radius grading by cumulative evolution time. Ordered weights $(w_1,w_2,w_3)=(1,2,2)$ sum to $K=5$ and give cumulative weights $(u_1,u_2,u_3)=(1,3,5)$, where $u_c=\sum_{c'\le c}w_{c'}$. Red points assign each shared evolution piece the radius $\rho_c=1+\lceil2v_\LR u_c\tau+\ell\rceil$, with padding $\ell$ above the blue longest-branch light cone, up to the longest-branch radius $R(T)$ at $u_c=K$.}
    \label{fig:construction}
\end{figure}

\emph{Sharing evolutions between branches.} We implement the jump operator with a \emph{compact additive-digit ladder}, a block-encoding circuit that builds every branch of the jump from a single \textit{shared} set of evolutions. Its time register holds a sign bit and $b$ magnitude bits, or digits; digit $c$ controls an evolution of duration $w_c\tau$, where the weights $w_1,\ldots,w_b$ are positive integers summing to $K$. The sign-and-magnitude time register is that of Refs.~\cite{Chen2023thermal,Chen2023gibbs}, which apply the controlled evolution at integer multiples of a base step; the digit weights, their radii and the sharing of evolutions between branches are introduced in this work.

A weight list is \emph{complete} if every magnitude $j=0,\ldots,K$ is the sum of some subset of the weights. For each magnitude we fix one such subset, recorded as a bit string $\bs\chi(j)\in\{0,1\}^b$ with $\sum_cw_c\chi_c(j)=j$. The branch labelled $sj$, with sign $s=\pm1$ and magnitude $j$, is conjugated by the digits it selects, applied in order: $V_{s,j}=(U_b^{(s)})^{\chi_b(j)}\cdots(U_1^{(s)})^{\chi_1(j)}$, with $U_c^{(s)}=e^{isw_c\tau H_{a,\rho_c}}$ the unitary evolution of the Hamiltonian over a patch of radius $\rho_c$. Conjugating the coupling therefore schedules $2b$ controlled evolutions, of total duration $2K\tau$, shared by all branches; generating the nonzero branches independently would take total duration $2\tau K(K+1)$.

Digit $c$ acts after digits $1,\ldots,c-1$, so by its end every branch that uses it has evolved for at most $u_c\tau$, where $u_c=\sum_{c'\le c}w_{c'}$ is the cumulative weight. Giving digit $c$ the padded light-cone radius of that elapsed time,
\begin{equation}
    \rho_c\coloneqq1+\lceil2v_\LR u_c\tau+\ell\rceil,\qquad 1\le c\le b,
    \label{eq:compactradii}
\end{equation}
therefore preserves \cref{eq:adaptivemain}. \Cref{fig:construction} shows the radii growing with cumulative weight at a fixed padding $\ell$. In the numerics we fix the minimum width $b=\lceil\log_2(K+1)\rceil$ and search over complete weight lists and their order for the cheapest ladder. For a shortened support, whose largest retained magnitude is $K_{\rm act}$, the ladder uses $K_{\rm act}$ in place of $K$, while the label overhead still counts all $N$ candidate labels.

More generally, a branch can be compiled as any ordered product of $p_j$ evolution pieces, $V_j=U_{j,p_j}\cdots U_{j,1}$ with $U_{j,p}=e^{i\operatorname{sgn}(t_j)H_{a,\rho_{j,p}}\theta_{j,p}}$, where the positive durations $\theta_{j,p}$ sum to $|t_j|$ and piece $p$ ends at elapsed time $u_{j,p}=\sum_{p'\le p}\theta_{j,p'}$. Its light-cone error is then at most $\varepsilon_j$, and the spatial error at most $\esp$, where
\begin{equation}
    \varepsilon_j\coloneqq\min\!\left\{2,\sum_pC_4\theta_{j,p}e^{\mu v_\LR u_{j,p}-\mu(\rho_{j,p}-1)/2}\right\},\qquad \esp=\sum_j|c_j^\star|\varepsilon_j,
    \label{eq:wordmain}
\end{equation}
and $\varepsilon_0=0$ for the zero-time branch. \Cref{eq:wordmain} extends \cref{eq:espmain} to the compiled circuit. It is an upper bound, so it cannot say which of two compilations has the smaller actual error.

Coefficient preparation loads each label $j$ with amplitude $\sqrt{|c_j^\star|/\norm{\bs{c}^\star}_1}$, writing its sign to the sign bit ($0$ for $j\ge0$, $1$ for $j<0$), which sets the direction $s$ of every evolution, and $\bs\chi(|j|)$ to the digits. Applying a diagonal phase $e^{i\arg c_j^\star}$ and the controlled, conjugated coupling, and then undoing the preparation, block-encodes the compiled jump divided by $\norm{\bs{c}^\star}_1$. A coupling that is not unitary needs its own block encoding, and coefficient preparation and phase synthesis are further circuit components.

We estimate the cost of each block encoding of a filtered jump in Toffoli gates. Let $\mathcal P=\{(\theta_p,\rho_p)\}$ list the evolution pieces of one side of the conjugation, piece $p$ running for time $\theta_p$ on a patch of radius $\rho_p$; the other side repeats them in reverse, which gives the factor $2$ in
\begin{equation}
    \begin{split}
        Q_p&\coloneqq\lceil\lambda_H(\rho_p)\theta_p\rceil+\lceil\log_2(N/\eHS)\rceil,\\
        C_{\rm BE}(\mathcal P)&\coloneqq A_{\rm norm}\left[2\sum_{p\in\mathcal P}Q_pC_H(\rho_p)+(N-2)+2\lceil\log_2N\rceil\right].
    \end{split}
    \label{eq:costmain}
\end{equation}
Piece $p$ makes $Q_p$ queries to a block encoding of the patch Hamiltonian $H_{a,\rho_p}$, whose nonidentity Pauli coefficients have one-norm $\lambda_H(\rho_p)$. Each query costs $C_H(\rho)=L_t(\rho)-2+2C_{\rm prep}(\rho)+2\lceil\log_2L_t(\rho)\rceil$ Toffolis, where $L_t(\rho)$ is the number of Pauli terms in the patch Hamiltonian and $C_{\rm prep}(\rho)$ the cost of preparing its coefficients. The surcharge $\lceil\log_2(N/\eHS)\rceil$ adds a fixed number of queries to every piece and is the only place the cost-model parameter $\eHS=10^{-3}$, a nominal simulation precision, enters; implementation accuracy is set separately, through $\zeta_{\rm imp}$ below. The terms $(N-2)+2\lceil\log_2N\rceil$ charge label selection and reflection over all $N$ candidate labels, even when the support is shortened. The multiplier $A_{\rm norm}\coloneqq\max\{1,\lceil\norm{\bs{c}^\star}_1\rceil\}$ is an accounting convention. For the ladder, $\mathcal P=\{(w_c\tau,\rho_c)\}_{c=1}^{b}$, and grading changes the patch cost of each piece while keeping the $2b$ controlled evolutions.

Physical energy units must scale together: for a shift $E_{\rm mid}$ and scale $\alpha>0$, with $\bar H=(H_{\rm phys}-E_{\rm mid}\ident)/\alpha$,
\begin{equation}
    \bar\Delta=\Delta_{\rm phys}/\alpha,\quad \bar\lambda_H=\lambda_H^{\rm phys}/\alpha,\quad \bar v_\LR=v_\LR^{\rm phys}/\alpha,\quad t_{\rm phys}=\bar t/\alpha.
    \label{eq:energyscalingmain}
\end{equation}
The scalar shift cancels in conjugation, and $\bar\lambda_H\bar t=\lambda_H^{\rm phys}t_{\rm phys}$. In the two-dimensional cost comparisons the gap, bandwidth, patch norms $\lambda_H(\rho)$ and locality constants are chosen independently of one another, to show how the compiled cost responds to each. Their query-cost functions are specified with the results. \Cref{app:frontier} gives the rest of the accounting: the Toffoli cost and work qubits of each patch-Hamiltonian query, the cost of separately generated branches and of the ladder, the $T$-gate, rotation and ancilla-qubit counts, including preparation of the coefficients by coherent alias sampling, the scaling with system size and gap, the settings behind \cref{tab:resources}, and the costs left out, such as the outer loop and rotation synthesis.

\subsection{Convergence Survives Truncation}

Our third contribution is the observation that convergence survives truncation. The light-cone schedule grows with the branch time, and for the long branches of a small-gap filter it reaches the whole system. For a maximum radius $R^*$, let $\bs{R}^*$ be the light-cone schedule truncated at $R^*$,
\begin{equation}
    (\bs{R}^*)_j \;\coloneqq\; \min\{R_j(\ell),\, R^*\},
    \label{eq:cap}
\end{equation}
and $\calL_{N,\bs{R}^*}$ the resulting generator. The distance $\delta_{N,\bs{R}^*}=\indnorm{\calL_{N,\bs{R}^*}-\calL_N}$ obeys $\delta_{N,\bs{R}^*}\le4m\norm{\bs{c}^\star}_1\esp(\bs{R}^*)$, because both ideal jumps have norm at most $\norm{\bs{c}^\star}_1$. Applying the perturbation argument again gives the following stationary-state guarantee.

\begin{theorem}[Stationary state of the truncated dynamics; conditional on a mixing pair]
    \label{thm:stationary}
    If the reweighted global generator $\calL_N$ has mixing pair $(\lambda_N, \kappa_N)$ and $\kappa_N\,\delta_{N,\bs{R}^*} < \lambda_N$, then the truncated generator $\calL_{N,\bs{R}^*}$ mixes, has a unique stationary state $\sigma_{N,\bs{R}^*}$, and that state lies close to the ground state:
    \begin{equation}
        \tnorm{\sigma_{N,\bs{R}^*} - \Pi_0} \;\le\; \frac{\kappa_N}{\lambda_N} \Bigl[ m\bigl(\eup^2 + \norm{\bs{c}^\star}_1\, \eup\bigr) + \delta_{N,\bs{R}^*} \Bigr].
        \label{eq:mainbound}
    \end{equation}
\end{theorem}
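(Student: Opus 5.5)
The plan is to treat the truncated generator as a perturbation of the reweighted global generator $\calL_N$, exactly as in \cref{prop:transfer}, and then to compare its stationary state directly with $\Pi_0$, which is not stationary for $\calL_N$ but is nearly so.

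\emph{Step 1: mixing and uniqueness.} Write $\calL_{N,\bs R^*}=\calL_N+\mathcal{E}$ with $\indnorm{\mathcal E}\le\delta_{N,\bs R^*}$. Both generators preserve the working space and trace. The perturbation argument already used for \cref{prop:transfer}, namely the Dyson/Duhamel expansion $e^{(\calL_N+\mathcal E)t}=e^{\calL_Nt}+\int_0^te^{\calL_N(t-s)}\mathcal E\,e^{(\calL_N+\mathcal E)s}ds$ restricted to traceless Hermitian inputs, then gives mixing with pair $(\lambda_N-\kappa_N\delta_{N,\bs R^*},\kappa_N)$. Here one uses that $\mathcal E$ maps into traceless operators, so the Grönwall estimate closes on the traceless subspace. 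Uniqueness of $\sigma_{N,\bs R^*}$ follows.

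\emph{Step 2: residual of $\Pi_0$.} The key identity is that for a generator $\mathcal M$ mixing with pair $(\lambda,\kappa)$, with stationary state $\sigma$, any state $\rho$ satisfies
\begin{equation}
\rho-\sigma=-\int_0^\infty e^{\mathcal M s}\mathcal M(\rho)\,ds .
\end{equation}
This holds because $\frac{d}{ds}e^{\mathcal Ms}(\rho-\sigma)=e^{\mathcal Ms}\mathcal M(\rho)$ and $e^{\mathcal Ms}(\rho-\sigma)\to0$. Since $\mathcal M(\rho)$ is traceless and Hermitian, it follows that $\tnorm{\rho-\sigma}\le(\kappa/\lambda)\tnorm{\mathcal M(\rho)}$.

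I would apply this identity with $\mathcal M=\calL_{N,\bs R^*}$ and $\rho=\Pi_0$, and use $\mathcal M(\Pi_0)=\calL_N(\Pi_0)+\mathcal E(\Pi_0)$. The second term has trace norm at most $\delta_{N,\bs R^*}$. To avoid the degraded rate $\lambda_N-\kappa_N\delta$ in the prefactor, I would instead expand around $\calL_N$. Write $\sigma_{N,\bs R^*}-\Pi_0=-\int_0^\infty e^{\calL_Ns}\calL_N(\sigma_{N,\bs R^*}-\Pi_0)\,ds$. Then use $\calL_N\sigma_{N,\bs R^*}=-\mathcal E\sigma_{N,\bs R^*}$, together with $\tnorm{\sigma}=1$, so that $\tnorm{\mathcal E\sigma}\le\delta$. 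This yields $\tnorm{\sigma_{N,\bs R^*}-\Pi_0}\le(\kappa_N/\lambda_N)[\tnorm{\calL_N\Pi_0}+\delta_{N,\bs R^*}]$, which is exactly the shape of \cref{eq:mainbound}.

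\emph{Step 3: bounding $\tnorm{\calL_N(\Pi_0)}$.} The coherent part vanishes since $[H,\Pi_0]=0$. For each jump $K=K_{a,N}$, write $K\ket{\psi_0}=\alpha\ket{\psi_0}+\ket{\phi}$ with $\ket\phi\perp\ket{\psi_0}$ and $\norm{K\ket{\psi_0}}\le\eup$. Then
\begin{equation}
\calD[K](\Pi_0)=K\Pi_0K^\dagger-\tfrac12\{K^\dagger K,\Pi_0\}.
\end{equation}
The first term has trace norm $\norm{K\ket{\psi_0}}^2\le\eup^2$. The anticommutator term is $\tfrac12(K^\dagger K\ket{\psi_0}\bra{\psi_0}+\text{h.c.})$, of trace norm at most $\norm{K^\dagger K\ket{\psi_0}}\le\norm{K}\,\eup\le\norm{\bs c^\star}_1\eup$. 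Summing over the $m$ jumps gives $m(\eup^2+\norm{\bs c^\star}_1\eup)$.

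The main obstacle is Step 2's use of stationarity of the truncated state against the global mixing pair. One must check that the mixing estimate is applied only to traceless Hermitian operators inside the invariant subspace on which $(\lambda_N,\kappa_N)$ is asserted, and that both $\Pi_0$ and $\sigma_{N,\bs R^*}$ lie in it. This requires the truncated jumps to commute with $\Pi_{\rm S}$, which holds if the patch Hamiltonians respect the symmetry. I would state this as the standing invariance hypothesis noted after \cref{prop:transfer}.
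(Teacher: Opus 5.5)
Your proposal is correct and follows the paper's proof. Both use Duhamel--Gr\"onwall for mixing and uniqueness, the bound $\kappa_N/\lambda_N$ on the inverse of $\calL_N$ on traceless Hermitian operators, and the same rank-one expansion of $\calD[K_{a,N}](\Pi_0)$, which gives $\eup^2+\norm{\bs c^\star}_1\eup$ per jump. The only difference is cosmetic: the paper inserts $\sigma_N$ and bounds $\sigma_N-\Pi_0$ and $\sigma_{N,\bs R^*}-\sigma_N$ separately before the triangle inequality, whereas you apply the inverse once to the combined residual $(\calL_N-\calL_{N,\bs R^*})\sigma_{N,\bs R^*}-\calL_N(\Pi_0)$, which yields the identical bound.
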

\Cref{app:obs3} gives the proof.
For the implementation, write $K_{a,N}^{(\bs R)}=\sum_jc_j^\star V_jA_aV_j^\dagger$ for the ideal compiled jump, with the compiled conjugators $V_j$ above in place of $e^{iH_{a,R_j}t_j}$ in \cref{eq:mlocaljump}, and $\widetilde K_{a,N}^{(\bs R)}=\sum_j\widetilde c_j\widetilde V_jA_a\widetilde V_j^\dagger$ for the jump the circuit applies, with rounded coefficients $\widetilde c_j$ and approximate unitary conjugators $\widetilde V_j$. We assume the implemented jumps preserve the same working space, and let $\zeta_{\rm imp}\ge\max_a\norm{\widetilde K_{a,N}^{(\bs R)}-K_{a,N}^{(\bs R)}}$ bound the implementation error. Since $\norm{A_a}\le1$, a sufficient allowance is
\begin{equation}
    \zeta_{\rm imp}\ge\sum_j|\widetilde c_j-c_j^\star|+2\sum_j|c_j^\star|\norm{\widetilde V_j-V_j}.
    \label{eq:implementationmain}
\end{equation}
The allowance $\zeta_{\rm imp}$ also absorbs any further preparation or block-encoding error, measured after undoing the normalisation. The implemented jump norm can reach $\norm{\bs{c}^\star}_1+\zeta_{\rm imp}$, giving extra generator distance $m(4\norm{\bs{c}^\star}_1+2\zeta_{\rm imp})\zeta_{\rm imp}$. If $4\norm{\bs{c}^\star}_1(\esp+\zeta_{\rm imp})+2\zeta_{\rm imp}^2<g_d-\greq$, the generator with the implemented jumps mixes, and its stationary state $\widetilde\sigma$ obeys
\begin{equation}
    \tnorm{\widetilde\sigma-\Pi_0}\le\frac{\eup^2+\norm{\bs{c}^\star}_1\eup+4\norm{\bs{c}^\star}_1(\esp+\zeta_{\rm imp})+2\zeta_{\rm imp}^2}{g_d-\greq}.
    \label{eq:statescoremain}
\end{equation}
Here $\greq=2[\norm{\bs{c}^\star}_1+\norm{\bs{d}}_1]\eta(\bs c^\star,\bs d)$ and $g_d$ is the reference mixing margin.

The heating, truncation and implementation errors are all multiplied by the same factor, $\kappa_N/\lambda_N$. For a fixed mixing pair, each jump's share of the error must therefore shrink as $1/m$ to certify a fixed global error. Because the spatial tails decay exponentially, the padding need only grow as $\log m$; the implementation precision and the raising error must also improve. When the truncated maximum radius $R^*$ falls below the padded light-cone radius of the longest branch $R(T)$, the branches fall into two groups. A branch whose own padded light cone fits inside the cap, $R_j(\ell)\le R^*$, is generated on its full patch and keeps the exponentially small error of \cref{eq:adaptivemain}. A longer branch is cut off inside its light cone, where the Lieb--Robinson bound gives no useful control, so the only bound on its error is the trivial one: the exact and truncated couplings each have norm at most one, so they differ by at most two. Weighting each branch by its coefficient gives
\begin{equation}
    \esp(\bs{R}^*) \;\le\; C_4\, e^{-\mu\ell/2} \sum_{j : R_j(\ell) \le R^*} |c_j^\star|\,|t_j| \;+\; 2 \sum_{j : R_j(\ell) > R^*} |c_j^\star|.
    \label{eq:capbound}
\end{equation}
The first sum shrinks exponentially as the padding grows. The second does not shrink at all: it is twice the coefficient weight carried by the truncated branches, so unless those branches carry little weight it dominates, and the bound cannot certify a cap below the light cone. The chains of \cref{sec:numerics} are in exactly this situation, yet the truncated Ising and Hubbard dynamics remain accurate, so the bound is far from tight there.

The bound is also global. It controls the whole state in trace norm, and through the factor $m$ in $\delta_{N,\bs R^*}$ it grows with the number of jumps, and so with the system size. A guarantee that does not weaken as the system grows would have to control local observables instead, which requires a local stability result for dissipative dynamics, of the kind proved in Ref.~\cite{Cubitt2015} under rapid mixing, and goes beyond the global bound used here.

\emph{Stopping rule.} The gap also yields a stopping rule. On the working space,
\begin{equation}
    \Pi_{\mathrm S}(H - E_0 \ident)\Pi_{\mathrm S} \;\succeq\; \Delta\,(\Pi_{\mathrm S} - \Pi_0)
    \qquad\text{and hence}\qquad
    1 - \Tr[\Pi_0 \rho] \;\le\; \frac{\Tr[H\rho] - E_0}{\Delta}
    \label{eq:energycertmain}
\end{equation}
for every state $\rho=\Pi_{\mathrm S}\rho \Pi_{\mathrm S}$. At test $r$, use independent preparations and an energy upper confidence bound $\overline E_r$, a ground-energy lower bound $\underline E_0$, and a positive gap lower bound $\underline\Delta$, all on this space. Accept when $(\overline E_r-\underline E_0)/\underline\Delta\le\varepsilon$; this certifies population at least $1-\varepsilon$ and trace-norm error at most $2\sqrt\varepsilon$, that is, trace distance at most $\sqrt\varepsilon$. With valid bounds every acceptance is correct, although a loose $\underline E_0$ or $\underline\Delta$ can reject a state that meets the target. Give test $r$ a failure probability $\delta_r$, covering all of its estimates and any uncertainty in the reference bounds, with $\sum_r\delta_r\le\delta_{\rm fail}$; by the union bound, the probability that any test wrongly accepts is then at most $\delta_{\rm fail}$. Measurement is the other cost of the test, and these allowances set it. If each energy estimate averages single-shot outcomes of bounded range, Hoeffding's inequality~\cite{Hoeffding1963} gives an upper confidence bound within $\varepsilon_E$ of the true energy, with failure probability $\delta_r$, from $O(\varepsilon_E^{-2}\log(1/\delta_r))$ preparations, with a constant set by that range. To accept, the bound must be tighter than the margin by which the true energy excess falls below $\varepsilon\underline\Delta$. When the excess sits a fixed fraction below that threshold, a test therefore costs of order $(\varepsilon\underline\Delta)^{-2}\log(1/\delta_r)$ preparations, each of which runs the dynamics afresh, and more as the state approaches the threshold; spreading $\delta_{\rm fail}$ over more tests raises this only logarithmically.

The test certifies whatever state the computation actually produces, so it remains valid whatever errors the implementation makes. It does not, however, ensure that an acceptable state is ever produced. That needs two further guarantees. The dynamics must converge, which under a mixing pair takes time of order $\lambda_N^{-1}\log(\kappa_N/\varepsilon)$ (\cref{eq:tstop}). The outer loop, the algorithm that simulates the Lindblad dynamics by calling the block-encoded jumps, must also reproduce that dynamics closely enough that its own error, from splitting the channel across jumps and from finite time steps, fits within the remaining allowance; \cref{tab:outerloop} shows the size of the splitting error on the six-site chain. Neither is costed in this paper, which prices the jump primitive alone (\cref{app:frontier}).

\begin{figure*}[b]
    \includegraphics[width=\textwidth]{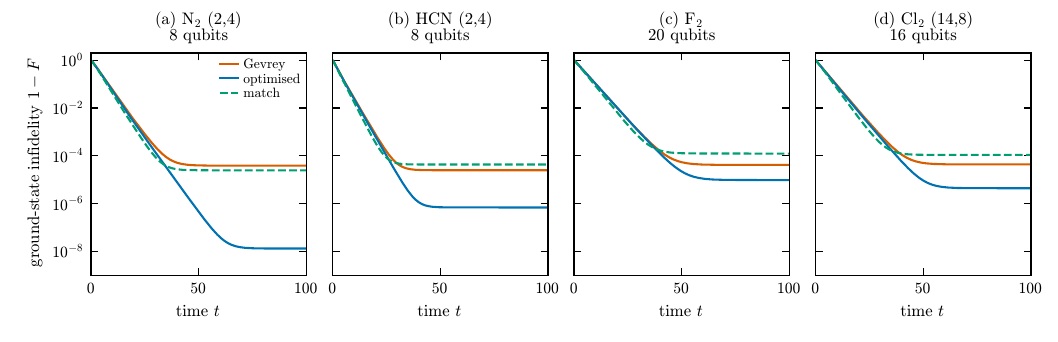}
    \caption{Ground-state infidelity $1-F(t)$, $F(t)=\langle\psi_0|\rho(t)|\psi_0\rangle$, under Lindblad evolution from the highest-energy state of the fixed-electron-number, $S_z=0$ sector, for the Gevrey filter, the optimised filter at equal cost and the match filter at matched design-grid leakage (whole-band for F$_2$), all with full-Hamiltonian branches; a molecular Hamiltonian has no locality to grade, so there is no patch panel. (a)~N$_2$ and (b)~HCN in $(2,4)$ active spaces. (c)~F$_2$ in its full STO-3G space, sector dimension $100$. (d)~Cl$_2$ in a $(14,8)$ active space, sector dimension $64$. The F$_2$ match curve uses the re-solved coefficients of \cref{app:numerics}. All panels show $0\le t\le100$; the F$_2$ and Cl$_2$ runs continue to $t=160$, where $\tnorm{\calL[\rho]}$ is below $10^{-10}$ and $F$ changes by less than $10^{-10}$ over the final five time units.}
    \label{fig:molconvergence}
\end{figure*}

\section{Implementation and numerics}
\label{sec:numerics}
\label{sec:impact}

The linear program of \cref{thm:twostage} allows two optimisation regimes: we may minimise the leakage at fixed cost, or minimise the cost at fixed leakage. The \emph{optimised filter} keeps every branch, so it costs the same as the reference, and minimises the heating leakage. The \emph{match filter} keeps the reference's leakage and minimises the cost: it is the shortest symmetric set of branches whose heating leakage on the design grid is no larger than the reference's. Both new filters cool at least as strongly as the Gevrey filter at every energy where the cooling constraint is imposed. For each model we fix the time grid, the cooling floor and the budgets on the coefficient norm and displacement, optimise the coefficients, and then price every feasible support and ladder weight list with the cost model of \cref{eq:costmain}. The cost columns of \cref{tab:molecules,tab:lattice} give the percentage change in this ladder cost relative to the Gevrey filter compiled the same way, with gaps, coefficient norms and times in the units of the normalised Hamiltonian.

\subsection{Molecules: Testing the Optimised Filter}
We use the fourteen molecules to which Ref.~\cite{LiZhanLin2025ab} applied dissipative preparation, counting the chain and square geometries of H$_4$ separately, for the fifteen rows of \cref{tab:molecules}. All molecules presented in this work use the STO-3G basis; a label such as LiH $(2,4)$ gives the numbers of active electrons and active spatial orbitals, and molecules without a label use their full orbital space. Molecular orbitals are delocalised, so the interaction graph is complete and every patch of positive radius contains the whole molecule. Each branch therefore simulates the full Hamiltonian, and only the first observation, the optimisation of coefficients and support, applies: the molecules test the linear optimisation of the filter on its own. We do not rule out Lieb--Robinson radius grading on larger molecular targets with geometrically bounded interactions. Gaps are computed in the sector with fixed electron number and $S_z=0$, and costs use the Pauli-term count and coefficient one-norm of the normalised Jordan--Wigner Hamiltonian. For seven of the molecules, H$_2$, the H$_4$ chain, LiH, N$_2$, HCN, F$_2$ and Cl$_2$, we also simulate the full density-matrix dynamics, with norm-one spin-resolved hopping couplings $A_{ii'\sigma}\propto a^\dagger_{i\sigma}a_{i'\sigma}+\mathrm{h.c.}$, $i<i'$, which preserve this sector while connecting different total-spin sectors.

\begin{table*}[t]
    \caption{Molecular filter design in STO-3G. $n$ is the number of qubits; a pair such as $(2,4)$ after a molecule gives the numbers of active electrons and spatial orbitals. The physical gap is in hartree in the fixed-electron-number, $S_z=0$ sector, and $\Delta$ is that gap divided by the sector half-bandwidth. $N$ and $T$ are the reference node count and realised maximum time. The optimised filter keeps the full support, and the match filter is the shortest symmetric support meeting the reference leakage on the design grid; the F$_2$ and H$_{10}$ match vectors are re-solved to meet it on the whole band (\cref{app:numerics}). Leakage gain is $L_\Delta[h_{\eref,N}]/L_\Delta[h_{\rm opt}]$, with $h_{\rm opt}$ the optimised filter's response, evaluated on a fine check grid of $20{,}000$ energies. Active is $N_{\rm act}/N$, the fraction of branches retained. The cost change is $100(C_{\rm match}/C_{\eref,N}-1)\%$, where $C_{\rm match}$ and $C_{\eref,N}$ are the ladder costs $C_{\rm BE}$ of the match and Gevrey filters; a negative value is a saving. H$_4$ chain and square have $2.0$~\AA{} spacing, and the H$_6$ and H$_{10}$ chains have $0.7$~\AA{} spacing.}
    \label{tab:molecules}
    \begin{ruledtabular}
        \small
        \begin{tabular}{lcccccccc}
             & & & & & & & \multicolumn{2}{c}{match} \\
            \cline{8-9}
            molecule & $n$ & gap (Ha) & $\Delta$ & $N$ & $T$ & leakage gain & active & cost change \\
            \colrule
            \MolTableRows
        \end{tabular}
    \end{ruledtabular}
\end{table*}

At the reference leakage, the match filter drops the longest branches and lowers the ladder cost for every molecule (\cref{tab:molecules}). It does not reduce the overhead of the label register, whose size and preparation are set by the number of candidate labels $N$. Each shortened support is re-optimised against the reference leakage. For most molecules this constraint is imposed on the design grid, and a check on a much finer grid shows how far the leakage exceeds the cap between grid points. For F$_2$ and H$_{10}$ we instead re-solve by constraint generation~\cite{Kelley1960}, adding violated energies until the error bounds between samples guarantee the reference leakage, and a cooling response at least $2\times10^{-4}$ above the Gevrey floor, across the whole band.

\Cref{fig:molconvergence} shows the ground-state infidelity against time for four of the molecules, from density-matrix simulations of the Lindblad dynamics. The optimised filters reach a lower infidelity than the Gevrey filter on all four. The match filters do so only for N$_2$; on HCN, F$_2$ and Cl$_2$ their infidelity is about $1.7$ to $3$ times higher. Equal worst-case leakage still allows different responses at the particular transition energies that set the stationary state. Matching the worst-case leakage therefore lowers the cost without necessarily matching the accuracy, which depends on the molecule.

\subsection{Chains: Models and Filter Design}
The chain numerics test filter design, locality and relaxation on three models, each with open boundaries. The first is the transverse-field Ising chain of Ding, Chen and Lin~\cite{DingChenLin2024}, $H=-\sum_iZ_iZ_{i+1}-h\sum_iX_i$ with $h=1.2$, at four, six and eight sites. The second is a six-site Heisenberg chain in a longitudinal field, $H=\sum_i(X_iX_{i+1}+Y_iY_{i+1}+Z_iZ_{i+1})+h\sum_iZ_i$, at a larger field $h=2$ and a smaller field $h=0.5$. The third is the one-dimensional Hubbard model of Ref.~\cite{DingChenLin2024}, $H=-t_{\rm hop}\sum_{i,\sigma}(a^\dagger_{i\sigma}a_{i+1,\sigma}+\mathrm{h.c.})+U\sum_in_{i\uparrow}n_{i\downarrow}$, on four sites with $t_{\rm hop}=1$ and $U=4$, mapped to eight qubits by the Jordan--Wigner transformation and restricted to the $36$-dimensional sector with $N_\uparrow=N_\downarrow=2$. Each Hamiltonian is shifted and scaled so that its spectrum lies in $[-1,1]$, and the filters are designed for its normalised gap $\Delta$, listed in \cref{tab:lattice}. Every Ising or Heisenberg site carries a jump coupling, $Z_a$ or $X_a$ respectively, and every Hubbard bond carries one normalised, spin-summed hopping coupling. The Gevrey filter is the erf window of Ref.~\cite{DingChenLin2024} at the model's own $\Delta$, with the node count and maximum time of that prescription.

Each simulation evolves the density matrix under the full Lindblad master equation, starting from the highest-energy eigenstate, which has no overlap with the ground state, and records the ground-state fidelity and the energy; \cref{app:numerics} gives the integration details. For the four- and six-site chains and for the Hubbard chain, we also compute the stationary state directly, from the null space of the Lindbladian. For Hubbard this null space is seven-dimensional, because the hopping jumps conserve total spin and total pseudospin and annihilate the $S_z=0$ member of the fully polarised quintet, so we take the stationary state reached from our initial state. The mixing diagnostics of \cref{app:numerics} are computed in the ten-dimensional spin-singlet, pseudospin-singlet subspace that contains the ground and initial states, where the stationary state is unique and the gap exceeds the design $\Delta$.

\begin{table*}[b]
    \caption{Chain filter design and dynamical outcomes. $L$ is the site count, $h$ the field strength, opt the full-support optimised filter and match the shortest symmetric support meeting the reference leakage on the design grid. $\Delta$ is the spectral gap after scaling the Hamiltonian to $[-1,1]$; $N,T$ are the reference node count and realised maximum time. $L_\Delta$ is the maximum heating amplitude on the fine check grid, in units of $10^{-3}$. Active and the cost change are as in \cref{tab:molecules}, so negative entries save work.
    $F=\langle\psi_0|\rho|\psi_0\rangle$ is the stationary ground-state population, except for the eight-site row, which gives $F(40)$. Patch columns use the match filter with the branch-dependent generator $\mathcal L^{[R^*]}$ of \cref{eq:cappedpatchmain}; $R^*$ is the radius cap, and $R^*=0$ keeps only the coupling support, a single site for spins and a bond for Hubbard. Each trajectory starts in the highest-energy state of its invariant sector; a dash denotes an uncomputed case.}
    \label{tab:lattice}
    \label{tab:lattice-dyn}
    \begin{ruledtabular}
        \footnotesize\setlength{\tabcolsep}{2.4pt}
        \begin{tabular}{lcccccccccccc}
             & & & & & & & \multicolumn{2}{c}{$F$, full $H$} & \multicolumn{4}{c}{$F$, match, radius cap $R^*$}\\
            \cline{8-9}\cline{10-13}
            model & $\Delta$ & $N$ & $T$ & $L_\Delta$: Gevrey $\to$ opt & active & cost & Gevrey & match & $0$ & $1$ & $2$ & $3$\\
            \colrule
            TFIM, $L=4$ & $0.189$ & 85 & $26$ & $2.3 \to 0.046$ & 37/85 & $-27\%$ & $0.99998$ & $0.99998$ & --- & $0.9727$ & --- & --- \\
            TFIM, $L=6$ & $0.097$ & 165 & $52$ & $2.3 \to 0.089$ & 71/165 & $-30\%$ & $0.99999$ & $0.99998$ & --- & $0.9334$ & $0.9894$ & --- \\
            TFIM, $L=8$ & $0.062$ & 257 & $80$ & $2.3 \to 0.213$ & 113/257 & $-39\%$ & $0.99999$ & $0.99998$ & --- & $0.8936$ & $0.9750$ & $0.9957$ \\
            Heisenberg, $L=6$, $h=2$ & $0.106$ & 151 & $47$ & $2.3 \to 0.081$ & 65/151 & $-32\%$ & $0.99991$ & $0.99958$ & --- & $0.2804$ & $0.7329$ & --- \\
            Heisenberg, $L=6$, $h=0.5$ & $0.108$ & 149 & $46$ & $2.3 \to 0.080$ & 65/149 & $-32\%$ & $0.99989$ & $0.99972$ & --- & $0.4261$ & $0.2905$ & --- \\
            Hubbard, $L=4$ (8 qubits) & $0.091$ & 177 & $55$ & $2.3 \to 0.096$ & 77/177 & $-32\%$ & $0.99829$ & $0.99417$ & $0.6716$ & $0.9680$ & --- & ---\\
        \end{tabular}
    \end{ruledtabular}
\end{table*}

On the chains, the optimised filter lowers the heating leakage by a factor of \LeakGainMin{} to \LeakGainMax{} (\cref{tab:lattice}). Its stationary infidelity on the spin chains is \OptInfidMin{} to \OptInfidMax{}, against \RefInfidMin{} to \RefInfidMax{} for the Gevrey filter; for the eight-site chain we quote the value at the final time. The Hubbard chain is the exception: there the optimised filter leaves \HubInfidOpt{} against the Gevrey filter's \HubInfidRef{}, because of the zero-frequency term discussed below. Held to the Gevrey filter's leakage, the match filter keeps only the shortest \MatchKeptMin{} to \MatchKeptMax{} of the branches and lowers the per-jump cost by \CostGainMin{} to \CostGainMax{}. These costs use full-Hamiltonian branches at a uniform radius, and each ladder digit pays the surcharge $\lceil\log_2(N/\eHS)\rceil=17$ or $18$ once; the square-lattice tests below grade the radii of the digits. The curves in \cref{fig:convergence}(a) relax at similar rates at first and then settle onto plateaux that depend on the filter.

\begin{table}[t]
    \caption{Effect of zero-frequency response on the four-site Hubbard chain. $h_N(0)=\sum_jc_j$ is the filter response at zero energy; $L_\Delta$ is its maximum heating amplitude. $1-F_\infty$ is stationary ground-state infidelity. The three columns use couplings $A_a$, $A_a-\tfrac12\ident$, and $A_a-\langle A_a\rangle\ident$, where $\langle A_a\rangle=\langle\psi_0|A_a|\psi_0\rangle$. The coherent Hamiltonian is held fixed. A dash marks an uncomputed or unreported entry.}
    \label{tab:hubdc}
    \begin{ruledtabular}
        \begin{tabular}{lccccc}
             & & & \multicolumn{3}{c}{$1 - F_\infty$} \\
            \cline{4-6}
            filter & $h_N(0)$ & $L_\Delta$ & $A_a$ & $A_a - \tfrac12$ & $A_a - \langle A_a \rangle$ \\
            \colrule
            \HubDCRows
        \end{tabular}
    \end{ruledtabular}
\end{table}

\emph{Centring the coupling.} The Hubbard chain is where the zero-frequency term of \cref{eq:eupbound} matters. A jump applied to the ground state returns, besides the excitations that the leakage controls, a component along the ground state itself, of amplitude $h_N(0)\bra{\psi_0}A_a\ket{\psi_0}$, which can shift the stationary state linearly. The Pauli couplings of the spin chains have zero ground-state expectation by symmetry, but the Hubbard bond coupling has an expectation of about a half. \Cref{tab:hubdc} lists each filter's zero-frequency response $h_N(0)$ beside its stationary infidelity, and the two go together: every filter leaves an infidelity of a few parts in a thousand, and the match filter, with the largest $h_N(0)$, leaves the most. Suppressing $h_N(0)$ through the filter alone is costly. Imposing $|h_N(0)|\le0.04$ in the spectral program, even with the displacement budget lifted so that the constraint can be met at all, raises the heating leakage to $\HubDCcapEps$, and $|h_N(0)|\le0.01$ is infeasible on this grid. Centring the coupling works much better. Replacing $A_a$ by $A_a-\bra{\psi_0}A_a\ket{\psi_0}$, the shift of \cref{eq:gauge} applied to the coupling without compensating the generator, brings the Gevrey filter's stationary infidelity down to $\HubResidualShift$; replacing it by $A_a-\tfrac12$, which needs no knowledge of the ground state, lowers every filter's infidelity by a factor of four. The hopping couplings of H$_2$, the H$_4$ chain, LiH, N$_2$ and HCN have small ground-state expectations, and their density-matrix simulations reach stationary infidelities below $10^{-4}$ without any shift.

\subsection{Chains: Truncating to Patches}
In the locality experiment each branch is generated on a patch whose radius grows with its evolution time, up to a cap $R^*$. For a chain, the irreducible-path form of the Lieb--Robinson bound~\cite{ChenLucasYin2023} gives the velocity $v_\LR=2\mathrm eJ/\alpha$ at $\mu=1$, where $J$ is the largest nearest-neighbour bond norm and $\alpha$ is the spectral half-width of the physical Hamiltonian. On-site terms only generate single-site rotations, which an interaction picture absorbs, so they do not enter $v_\LR$. We use no extra padding, $\ell=0$, and define
\begin{equation}
    \begin{split}
        r_j(R^*)&\coloneqq\min\!\left\{R^*,\,1+\left\lceil2v_\LR|t_j|\right\rceil\right\},\qquad j\ne0,\\
        K_a^{[R^*]}&\coloneqq c_0^\star A_a+\sum_{j\ne0}c_j^\star e^{iH_{B_{r_j(R^*)}(S_a)}t_j}A_a e^{-iH_{B_{r_j(R^*)}(S_a)}t_j},\\
        \mathcal L^{[R^*]}&\coloneqq-i[H,\cdot]+\sum_a\mathcal D[K_a^{[R^*]}].
    \end{split}
    \label{eq:cappedpatchmain}
\end{equation}
Only the jumps are truncated; the coherent part of the generator keeps the full Hamiltonian. At zero padding these are the jumps $K^{(\bs R^*)}_{a,N}$ and generator $\calL_{N,\bs R^*}$ of the capped schedule of \cref{eq:cap}. Distances count physical sites, with the two spin orbitals of a Hubbard site counted as one. Because the cap cuts the light-cone schedule short, the relevant bound is \cref{eq:espmain} evaluated at these radii. At $R^*=3$ the eight-site Ising chain uses radius $2$ for its shortest branches, $j=\pm1$, and radius $3$ for all the others; in every other case shown, every nonzero-time branch reaches the cap.

\begin{figure*}[t]
    \includegraphics[width=\textwidth]{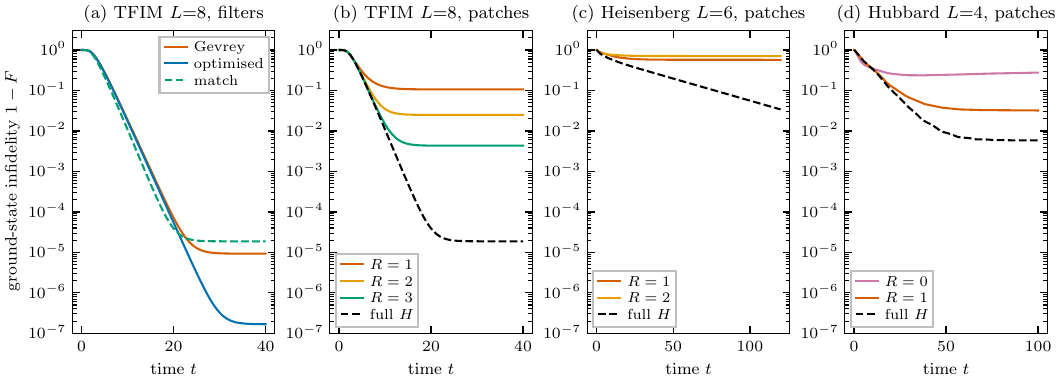}
    \caption{Convergence of the Lindblad dynamics from a zero-overlap initial state. (a)~Eight-site transverse-field Ising chain: ground-state infidelity against time for the Gevrey filter, the optimised filter at equal cost, and the match filter at matched design-grid leakage, all with full-Hamiltonian branches. The curves show comparable early relaxation and distinct late-time plateaux; the match cost is \MatchCostTFIMeight{} of the reference. (b)~The match filter with light-cone radii $r_j(R^*)=\min\{R^*,1+\lceil2v_\LR|t_j|\rceil\}$ around each coupling, with $v_\LR=2\mathrm eJ/\alpha$, bond norm $J$ and spectral half-width $\alpha$; the light cone of the longest branch spans the whole chain, so every patch shown truncates it. (c)~and (d)~The same for the six-site Heisenberg chain at $h = 0.5$ and the four-site Hubbard chain. \Cref{tab:lattice} gives stationary populations and the eight-site values at $t=40$.}
    \label{fig:convergence}
\end{figure*}

\Cref{fig:convergence}(b)--(d) show the capped dynamics $\mathcal L^{[R^*]}$ of \cref{eq:cappedpatchmain} for $R^*$ from $0$ to $3$. Every patch here lies far inside the light cone of the longest branch, whose radius $R(T)$ exceeds the chain length, so the bound of \cref{eq:espmain} guarantees nothing useful for these tests. The Ising and Hubbard chains work well anyway. With $R^*=1$, a patch of three sites around a single-site coupling or four around a bond, the ground-state population is at least $0.893$ (at the final time for the eight-site chain) [\cref{tab:lattice}; \cref{fig:convergence}(b) and (d)], and each unit increase in $R^*$ cuts the remaining infidelity by a factor of four to ten. The Hubbard jumps reach \HubbardRzeroF{} even at $R^*=0$, on the bond alone. The Heisenberg chain at the smaller field, $h=0.5$, behaves differently: its ground-state population is only \HeisStrongRone{} at $R^*=1$ and \HeisStrongRtwo{} at $R^*=2$, falling as the patch grows [\cref{fig:convergence}(c)]. This is also the chain on which even the untruncated dynamics is slow. Its Lindbladian gap is \MixRatioHeis{} times smaller than that of the six-site Ising chain, and starting from the highest eigenstate it reaches a ground-state population of only \HeisFonetwentyX{} by time $120$, against a stationary value above $0.999$. These small systems give some evidence that truncated patches can work when the dynamics mixes quickly and the truncation is small enough.

Whether a patch works therefore depends on two things: how much error the truncation introduces, and how quickly the dynamics removes it. To measure the first, we define the rate at which the truncated jumps heat the ground state,
\begin{equation}
    \Gamma_\uparrow(R^*)\coloneqq\sum_a\bigl\|(\ident-\Pi_0)K_a^{[R^*]}|\psi_0\rangle\bigr\|^2.
    \label{eq:heatingdiagnosticmain}
\end{equation}
At $R^*=1$ this rate is about $0.2$ for the Ising and Heisenberg chains and almost a hundred times smaller for Hubbard. It falls geometrically with $R^*$ on the Ising chain, and more slowly on the smaller-field Heisenberg chain. It measures transitions out of a single state, the ground state, whereas \cref{thm:stationary} uses the induced generator norm, a worst case over all inputs, together with a mixing estimate and its prefactor. For the second, the measured Lindbladian gaps of the smaller-field Heisenberg and Hubbard chains are \MixRatioHeis{} and \MixRatioHub{} times smaller than that of the six-site Ising chain. Fast relaxation is not enough on its own either: some Heisenberg patch generators relax faster, but to a state with a lower ground-state population. Both the patch error and the relaxation rate are therefore worth measuring when choosing a patch.

\subsection{Square Lattices: Grading the Radii}

\begin{table}[b]
    \caption{Ising model at $N=41$, $T=6\pi$, $R^*=64$, with coefficient budget $\Lambda_{\max}=1.25\norm{\bs d}_1$, where $\norm{\bs d}_1$ is the Gevrey one-norm. C denotes the compact ladder. U uses radius $64$ throughout; G uses graded radii; S grades radii after shortening the support. $N/N_{\rm ret}$ gives the candidate and retained label counts (C--U and C--G deliberately retain the prescribed $41$-label schedule, including its two zero-amplitude endpoints, to isolate grading; it is not the filter's active support, whose largest magnitude is $19$, and a baseline recompiled to that support would cost less), $R^*/R_{\rm ret}$ the radius cap and the largest radius used, and $\bar R$ the mean radius over evolution pieces. $\norm{\bs{c}^\star}_1$ is the coefficient one-norm and $\eup$ the raising error, bounded by the certified whole-band leakage because the couplings are assumed to have zero ground-state expectation. The error bound is the conditional trace-norm bound of \cref{eq:statescoremain}, using mixing margin $g_d=0.50$. TFIM cost is the ladder cost $C_{\rm BE}$ with the Ising cost functions above; cost change is $100(C/C_{\rm C-U}-1)\%$. The three rows isolate radius grading and support shortening within that ladder.}

    \label{tab:ablation}
    \begin{ruledtabular}
        \begin{tabular}{lcccccccc}
            design & $N/N_{\rm ret}$ & $R^*/R_{\rm ret}$ & $\bar{R}$ & $\norm{\bs{c}^\star}_1$ & $\eup$ & error bound & TFIM cost & cost change \\
            \colrule
            \TabRowLU \\
            \TabRowLG \\
            \TabRowLS \\
        \end{tabular}
    \end{ruledtabular}
\end{table}

Grading only pays off when the light cone is much smaller than the system, so we test it on square lattices, using cost functions built from the term counts of square-lattice Ising and Heisenberg blocks. For a patch of side $q=2R+1$, the number of Pauli terms $L_t(R)$ and their coefficient one-norm $\lambda_H(R)$ are
\begin{align}
    L_{t}^{\rm I}(R)&=2q(q-1)+2q^2,&
    \lambda_{H}^{\rm I}(R)&=2q(q-1)+2.5q^2,\nonumber\\
    L_{t}^{\rm H}(R)&=6q(q-1)+q^2,&
    \lambda_{H}^{\rm H}(R)&=6q(q-1)+9q^2,
    \label{eq:scoreblocksmain}
\end{align}
and the costs of preparing their coefficients are $C_{\rm prep}^{\rm I}=\operatorname{round}(0.99L_t^{\rm I}+25)$ and $C_{\rm prep}^{\rm H}=\operatorname{round}(0.69L_t^{\rm H}+136)$, rounded to the nearest integer. We combine these with a filter band $[\Delta,W]=[0.2,2]$ and locality parameters $C_4=\mu=v_\LR=1$, all chosen independently, so that the estimates isolate the effects of support reduction and grading. A calculation for a physical Hamiltonian would instead fix all of these from a single energy scale, through \cref{eq:energyscalingmain}.

The error-bound column of \cref{tab:ablation} evaluates \cref{eq:statescoremain} with implementation allowance $\zeta_{\rm imp}=10^{-3}+N\norm{\bs{c}^\star}_1\,2^{-16}$, assuming every coupling has zero ground-state expectation. The coefficient budget is $\Lambda_{\max}=1.25\norm{\bs d}_1$, the displacement cap uses $\gamma=0.8$ in \cref{eq:linearcapmain} with mixing margin $g_d=0.50$, and the support stage may exceed the full-support optimum's leakage and displacement by $5\%$ ($\xi=0.05$), within the same cap. The bound holds only under these assumptions. Each row uses full-precision coefficient vectors.

\Cref{tab:ablation} keeps the full-support coefficients fixed between the C--U and C--G rows, so their difference is due to grading alone. Grading the five ladder digits lowers the cost from $1.11\times10^{12}$ to $5.14\times10^{11}$ Toffolis, a saving of $53.8\%$; the cheapest complete weight list is $(7,6,4,1,2)$, at radii $(40,51,59,61,64)$ with mean $55.0$. Shortening the support to $35$ active branches lowers the graded cost further, to $3.18\times10^{11}$, a factor of $1.62$ below C--G and $3.5$ below C--U. The error bounds of the full and shortened graded designs are $0.04044$ and $0.04208$; shortening changes the raising error, the coefficient norm, the remaining margin $g_d-\greq$ and the implementation allowance, and so moves the bound slightly.

\begin{table*}[b]
    \caption{Outer-loop records on the open Ising chains at transverse field $1.2$, from the highest-energy eigenstate. (a)~First sampled time $t_\varepsilon$ with $1 - F(t_\varepsilon) \le \varepsilon$, samples at $t = 0, 0.1, \ldots, 400$; range and spread $(t_{\max} - t_{\min})/t_{\min}$ include only the filters reaching the target. (b)~Jump order: the six-site chain split into single-jump channels of unit duration with generator $\calL_a = -i[H,\cdot]/6 + \calD[K_a]$; ten sweeps, total jump time $60$, against the unsplit generator at time $10$; the ratio is $(1-F)/(1-F_{\mathrm{unsplit}})$ at these matched times. The two random orders are single realisations, with no ensemble uncertainty estimated; independent sampling applies the six jumps $(11,14,8,9,10,8)$ times, where the sweeps apply every jump ten times.}
    \label{tab:outerloop}
    \begin{ruledtabular}
    \small
        \begin{tabular}{lcccc@{\hspace{2em}}lcc}
            \multicolumn{5}{l}{(a) Sampled hitting times} & \multicolumn{3}{l}{(b) Jump order, six sites} \\
            chain & $\varepsilon$ & reaching it & time range & spread & order & $1-F$, ten sweeps & ratio \\
            \colrule
            TFIM, $L=4$ & $10^{-2}$ & $7/7$ & $9.9$--$11.9$ & $20.2\%$ & left to right & $0.016374$ & $1.053$ \\
            TFIM, $L=4$ & $10^{-3}$ & $6/7$ & $14.2$--$15.0$ & $5.6\%$ & alternating direction & $0.016308$ & $1.049$ \\
            TFIM, $L=4$ & $10^{-4}$ & $5/7$ & $18.9$--$19.8$ & $4.8\%$ & random permutation & $0.016792$ & $1.080$ \\
            TFIM, $L=6$ & $10^{-2}$ & $7/7$ & $10.2$--$11.6$ & $13.7\%$ & independent random site & $0.030644$ & $1.970$ \\
            TFIM, $L=6$ & $10^{-3}$ & $6/7$ & $14.2$--$15.3$ & $7.7\%$ & unsplit generator & $0.015553$ & $1$ \\
            TFIM, $L=6$ & $10^{-4}$ & $5/7$ & $18.6$--$19.4$ & $4.3\%$ & & & \\
        \end{tabular}
    \end{ruledtabular}
\end{table*}

\subsection{The Outer Loop}
The last question is how many times the outer loop must call the jumps. On the four- and six-site Ising chains we measure three things across one family of filters: how the relaxation rate depends on the filter, what sets the stationary floor, and what the order of the jumps costs (\cref{tab:outerloop}). The family cuts the reference grid off at a maximum time $T_{\rm target}$, with $T_{\rm target}\Delta$ from $1$ to $5$, and re-optimises the coefficients for each cut. We compare two ratios of heating to cooling. The band ratio $r_{\rm band}\coloneqq L_\Delta[h_N]/h_N(-\Delta)$ divides the worst heating anywhere on the band by the cooling at the band edge; the gap ratio $r_{\rm gap}\coloneqq|h_N(\Delta)/h_N(-\Delta)|$ compares heating and cooling at the gap itself. The two differ for an oscillating filter, whose worst heating need not sit at the gap.

The relaxation rate hardly changes while the band ratio varies over three decades. The Lindbladian gap varies by only about a tenth, and the sampled times to reach a target infidelity spread by about $20\%$ at $10^{-2}$, $8\%$ at $10^{-3}$ and $5\%$ at $10^{-4}$. Part of that narrowing comes from filters dropping out: the number of filters that reach each target falls from seven to five as the target tightens, and the filters lost are those with the highest plateaux. Among the filters that reach every target, the spread is about six per cent or less throughout.
The stationary floor, by contrast, follows the gap ratio. A two-level system with raising-to-cooling amplitude ratio $r_{\rm gap}$ has stationary excited population $r_{\rm gap}^2/(1+r_{\rm gap}^2)\approx r_{\rm gap}^2$ (\cref{app:obs1}). Across the fourteen filters, whose stationary infidelities span $6.3$ decades on four sites and $6.9$ on six, $(1-F_\infty)/r_{\rm gap}^2$ stays between $0.85$ and $1.37$ on four sites and between $0.78$ and $1.33$ on six, within $37\%$ of this simple prediction.
Unlike the continuous generator, the outer loop must also choose the order in which it applies the individual jumps. We split the six-site chain into single-jump channels of unit duration and compare orderings at equal total jump time. After ten sweeps, applying the jumps left to right leaves an infidelity of $1.64\times10^{-2}$, and alternating the direction on each pass leaves $1.63\times10^{-2}$, against $1.56\times10^{-2}$ for the unsplit generator. A fresh random order on each sweep leaves $1.68\times10^{-2}$, but drawing an independent random site at every step leaves $3.06\times10^{-2}$, almost twice the unsplit value.
\section{Discussion}
\label{sec:outro}

Dissipative approaches to ground-state preparation trade the need for a good initial state, present for coherent methods such as quantum phase estimation, for costly Lindblad dynamics whose stationary state is close to the ground state. Recent work reduced their ancilla requirement to a single qubit~\cite{DingChenLin2024}, which makes them initially seem attractive for early fault-tolerant machines; however, their high non-Clifford gate counts and run times temper near-term optimism. In this paper we have presented a construction for the main ingredient of dissipative techniques, the filtered jump operator, which can, under the right circumstances, significantly reduce the cost of implementation. Our construction is based on three observations, which we now discuss in turn.

The first observation is that designing the filter in the time domain turns filter design into linear programming. Once the evolution times are fixed and the coefficients are paired, the response is linear in the amplitudes, so the best filter in this family can be found exactly, and the cheapest acceptable support is found by searching over a single integer, the longest retained time. On every model and molecule we studied, the re-optimised filter reaches the Gevrey filter's own leakage using less than half of its maximum time, so the longer branches can simply be removed. This conclusion is tied to the Gevrey time grid, whose nodes we kept. Optimising the times as well would put them inside the cosines of the response and lose the linear program, although it would not rule out a global optimisation by other means. There is also no reason to expect a uniform grid to be best. The support stage removes the longest branches first, and a natural next step is to optimise the nodes continuously, keeping a few long branches for the sharp edge at the gap and many short, cheap ones for the rest of the response.

We also treated every coefficient as a free parameter. A basis designed for signals limited in both time and frequency, such as the discrete prolate spheroidal sequences, would restrict the family, so the best leakage it could reach would be no lower than ours. It would bring two benefits in return. Its long-time coefficients are small, and those coefficients dominate the second moment $\sum_j|c_j|t_j^2$ that controls the error between sampled energies, so the whole-band certificate would tighten. Its effective dimension, roughly the product of the time span and the bandwidth, would also predict the support length that the support stage currently finds by search.

Finally, we fixed what the filter has to achieve: the heating band $[\Delta,W]$ and the cooling floor $P(e)$ are set by the preparation target, and we asked how cheaply they can be met. Tasks that need less can ask for less. \citet{LiYangLin2026phase}, for example, decide which phase a system is in from the early-time response of phase-sensitive observables under filters with coarse energy resolution, and reach a decision long before the dynamics has mixed. Their saving comes from asking less of the filter and ours from meeting a fixed request more cheaply, so the two are independent, and a phase-decision protocol built on a time-domain filter like ours would benefit from both.

The second observation is that a light cone is set by elapsed time alone. In the ladder, every branch has evolved for at most $u\tau$ by the end of the digit at cumulative weight $u$, so that digit can be generated on the patch this time requires, which for the early digits is much smaller than the patch the longest branch needs. Giving each digit the radius of its elapsed time keeps the light-cone error bound of \cref{eq:adaptivemain}, because the patch boundary stays outside the light cone at every step. Generating each branch on a single patch sized for its final time gives a sharper bound, which the graded digits need not meet, since how much evolution remains after a digit depends on the branch. At the design point of \cref{tab:ablation}, grading lowers the cost by $53.8\%$, and across the design family the saving grows with the range of evolution times in the schedule. The saving is a constant factor, independent of system size, set by how much smaller the early patches are than the full light cone. A smaller Lieb--Robinson velocity shrinks every radius, but whether it increases the relative saving depends on what else is held fixed. Grading only helps when the light cone is much smaller than the system: in the chains and molecules of \cref{tab:lattice,tab:molecules}, $\lambda_H\tau$ is of order one and the light cone spans the whole system, so there is nothing to grade.

The third observation, that convergence survives localisation, holds where the dynamics mixes quickly and the truncation is small enough, and the mixing rate is an input to it: our stationary-state bound transfers a mixing estimate for the Gevrey generator and does not prove one. For weakly interacting systems, where cluster expansions already make the Gibbs state classically tractable at low~\cite{HelmuthMann2023}, high~\cite{MannMinko2024} and, more recently, arbitrary temperature~\cite{WaiteMann2026}, rapid mixing has been proved for a related one-sided filtered-jump construction~\cite{Zhan2025}; using that result here would need a transfer argument to our finite erf discretisation, which we have not carried out. Ding, Chen and Lin prove that the fixed point is unique when the couplings satisfy an ergodicity condition and show convergence numerically on Ising and Hubbard models~\cite{DingChenLin2024}, but they give no mixing pair for the chains studied here. Rates are starting to appear where the structure is favourable: a Heisenberg chain cooled at one end by an unfiltered lowering coupling has a Lindbladian gap that closes as $n^{-3}$~\cite{BeerBurgarth2026}, although that result concerns unfiltered relaxation and does not transfer directly to a filtered construction.

Under rapid mixing, the time that matters can be shorter than the time for the whole state to mix. Sums of geometrically local observables of a quasi-local Lindbladian equilibrate in a time independent of the system size~\cite{Smid2026}, which caps the number of outer-loop steps needed for such observables, and the same hypothesis gives a classical estimate of them in time linear in $n$. Both rely on rapid mixing, and neither covers the molecules here, whose observables are not geometrically local and for which no such certificate exists. For strongly interacting systems the rate has to be measured, as in our chain simulations, and the stopping rule of \cref{sec:theory} means it need not be known in advance.

The chain experiments also show what the bound does not. Patches that the bound cannot certify at all still work well on the Ising and Hubbard chains: the infidelity is $3$ to $11\%$ at $R^*=1$ and a few parts in a thousand at $R^*=3$ on the eight-site chain, and on these chains the accuracy improves with $R^*$ in every case we tested. The Heisenberg chains show that relaxing to a stationary state is a different thing from relaxing to the ground state. The diagnostic $\Gamma_\uparrow(R^*)$ of \cref{sec:numerics} tracks these outcomes, but it depends on the filter, the coupling, the Hamiltonian and the target as well as on the truncation, while the theorem bounds the shift of the stationary state by an induced generator norm over a mixing margin. Whether a bound that charges the patch error only at the ground state, such as $\Gamma_\uparrow$ over the Lindbladian gap, would be tight where the method works is an empirical question that the global bound leaves open. Fast mixing helps when the perturbation is small enough, and it does not replace that condition. The Hubbard chain adds a practical rule that the theorem does not state: if a coupling has a nonzero ground-state expectation, the filter's zero-frequency response acts as a Hamiltonian perturbation that can shift the stationary state linearly, and the fix is to centre the coupling, since suppressing that response in the filter costs too much leakage.

None of the example systems is a candidate for quantum advantage: each was chosen so that the dissipative dynamics could be checked exactly. The chains have at most eight sites and are simulated here classically; at any length, the transverse-field Ising chain maps to free fermions~\cite{LiebSchultzMattis1961}, the Heisenberg and Hubbard chains are solvable by Bethe ansatz~\cite{Bethe1931,LiebWu1968}, and gapped one-dimensional ground states can be found in polynomial time~\cite{LandauVaziraniVidick2015}. The molecular active spaces, of at most twenty qubits, are within reach of exact diagonalisation, and the ferromagnetic transverse-field Ising model on the square lattice, as on any graph, has a polynomial-time classical approximation of its partition function and ground-state energy~\cite{BravyiGosset2017}. The resource counts therefore price the primitive where its output can be verified. An advantage needs a system whose dissipative dynamics mixes quickly while cluster expansions~\cite{HelmuthMann2023,MannMinko2024,WaiteMann2026} fail to converge, quantum Monte Carlo has a sign problem and tensor networks lose accuracy, as in frustrated or doped two-dimensional models and strongly correlated molecules; for the latter, the evidence for an exponential advantage is still debated~\cite{Lee2023evidence}.

Our molecular results use only the first observation, and more could be done for molecules. In a delocalised orbital basis every branch simulates the whole Hamiltonian. We tested whether an orbital patch could play the role of a lattice patch on a hydrogen chain in a Boys-localised minimal basis: restricting the Hamiltonian that generates a central hopping coupling to its four nearest orbitals, out of six, gives a branch error of $0.13$ at normalised time $2$ and $0.9$ at time $10$. About half of the discarded one-norm is the Coulomb tail $n_i n_{i'}$, which is neither finite-range nor of bounded degree, so it falls outside the locality assumptions used here. The block structure of tensor-hypercontracted Hamiltonians~\cite{Lee2021}, or the energy locality of a low-energy subspace, could take the place that geometric locality plays on a lattice.

Working in a low-energy subspace also brings prior knowledge of the state. A dissipative continuation that follows a ground state along a reaction coordinate, checking the energy certificate at each point, would be another route to the transition-state searches that nudged-elastic-band methods perform classically; we have not pursued it. Nor have we tried a schedule of filters keyed to the energy certificate, which would spend the expensive long branches only in the final approach to the ground state. The difficulty is that the certificate bounds only a mean energy and a ground-state population, so the weight above an intermediate threshold $E_0+\Delta'$, with $\Delta'>\Delta$, is bounded only by $(\Tr[H\rho]-E_0)/\Delta'$, and a later filter designed for a narrower band must carry that tail, and its dynamics, in its error budget. Degenerate ground spaces would also need separate control of populations and coherences, which the present theorem does not provide.

Our counts rest on a simple cost model, and that model is where further savings lie. The query count is a first-order estimate, the block-encoding cost is a generic linear-combination-of-unitaries count, and each patch is charged as its enclosing box. Structured block encodings of lattice Hamiltonians~\cite{Babbush2018,LowChuang2019} and product formulas that exploit geometric commutation~\cite{ChildsSu2019} cost less than the generic count, and the short branches are a natural place to use them. A short Trotterised patch evolution is a product of small-angle rotations, whose cost falls with the angle in the probabilistic and quasiprobabilistic channel syntheses of Ref.~\cite{Bothe2026smallangle}. Those syntheses produce channels, though, and before their counts can be used one must specify how their error combines with a coherently controlled branch register, where \cref{eq:implementationmain} needs a unitary $\widetilde V_j$. A product formula applied directly is unitary and enters as an ordinary conjugator error, so this obstacle is specific to the quasiprobabilistic route. Conveniently, the short branches, which need the fewest Trotter steps, also use the smallest patches.

Jumps on disjoint patches in one Lindblad step can be applied in parallel, which leaves the gate count unchanged but divides the depth by the number of patches that fit. Product formulas with nearly linear spacetime cost~\cite{ChildsSu2019} could also change the quadratic dependence on patch volume assumed here. Of our three observations, only the second changes how the cost scales with system size under this cost model, from the whole lattice in every branch to a patch; the first and third change constants. It is in these constants, though, that the cost of ground-state preparation on early fault-tolerant hardware will be decided. The choices that set them, namely the filter, its time nodes, the compilation of the jump and the radius on which each piece is generated, are ones every implementation has to make, and their effect on the constants deserves the same scrutiny that the exponents have received.

\begin{acknowledgments}
We thank Adrian Chapman, Simon Devitt, Ryan Mann, Jannis Ruh, Gabriel Waite, Thomas Watts and Nathan Wiebe for thought-provoking discussions.
Project led by University of Technology Sydney and supported by Defence Science and Technology Group (DSTG) and Advanced Strategic Capabilities Accelerator (ASCA) through its Emerging and Disruptive Technologies (EDT) Program.
\end{acknowledgments}

\bibliography{refs}

\appendix

\section{Setting}
The notation and working space are fixed in \cref{sec:theory}. The locality proofs require a Lieb--Robinson bound that holds uniformly over restricted Hamiltonians, including commutators with terms that straddle the boundary of the restricted region.

\begin{assumption}[Uniform Lieb--Robinson bound; standard form from Refs.~\cite{LiebRobinson1972,HastingsKoma2006,NachtergaeleSims2006}]
    \label{ass:LR}
    There are constants $v_\LR > 0$, $\mu > 0$, finite $C_\LR$ such that for every $Y \subseteq \mathcal{V}$, operator $O_X$ supported on $X \subseteq Y$, and $O_Z$ supported on an arbitrary $Z \subseteq \mathcal{V}$, and every $t \in \mathbb{R}$,
    \begin{equation}
        \bigl\lVert \bigl[ e^{iH_Y t} O_X e^{-iH_Y t},\, O_Z \bigr] \bigr\rVert
        \;\le\; C_\LR\, \norm{O_X}\, \norm{O_Z}\, \min\{|X|,|Z|\}\, e^{-\mu\left(\dist(X,Z) - v_\LR |t|\right)}.
        \label{eq:LR}
    \end{equation}
\end{assumption}

The same constants apply to every region $Y$, so $O_Z$ can be a term of the full Hamiltonian crossing the boundary of $B_R(S_a)$. For the interactions considered here, with bounded degree, hyperedge size and term norms, the standard bounds give $v_\LR$ of order $Jk\Delta_G$, with constants depending only on $(k,\Delta_G,J,V)$. The sum over distance shells in the proof of \cref{prop:locality-formal} uses subexponential growth, or $V(r)\le C_\nu e^{\nu r}$ with $0<\nu\le\mu/2$. For more detailed exposition see Refs.~\cite{ChenLucasYin2023,WangHazzard2020}.

\begin{assumption}[Spectral gap and nondegenerate ground state; standing model assumption]
    \label{ass:gap}
    As in \cref{sec:theory}, $H$ has a unique ground state $\ket{\psi_0}$ on the working space, spectral gap $\Delta_{\rm spec} \coloneqq E_1 - E_0 > 0$ and design gap $0 < \Delta \le \Delta_{\rm spec}$.
\end{assumption}

This assumption fixes the single target used in the stationary-state bounds. The energy certificate \cref{eq:energycertmain} also applies to a degenerate low-energy projector (\cref{eq:energycert}).

\section{Observation 1: finite filters optimised in the time domain}
\label[appendix]{app:obs1}

This appendix proves the two-stage optimisation result and the same-resource certificate behind the first observation.
The heating and cooling constraints have a simple physical interpretation. To illustrate, consider a two-level mode of energy $e>0$ and let $b_e=\ket{0}\bra{1}$ and $n_e=\Tr[b_e^\dagger b_e \rho]$. The filtered jump $h_N(-e)b_e+h_N(e)b_e^\dagger$, at unit rate, has cross terms in the dissipator that act on coherences and vanish on the number observable. Thus
\begin{equation}
    \dot{n}_e = -\bigl(|h_N(-e)|^2 + |h_N(e)|^2\bigr) n_e + |h_N(e)|^2,
    \qquad
    n_e(\infty) = \frac{|h_N(e)|^2}{|h_N(e)|^2 + |h_N(-e)|^2}.
    \label{eq:mode}
\end{equation}
Minimising the heating leakage $L_\Delta[h_N]$ of \cref{sec:theory} while holding the cooling floor $h_N(-e) \ge P(e) > 0$ on the same band controls this stationary occupation.

\begin{proof}[Proof of \cref{thm:twostage}]
    On the fixed symmetric time grid, $d_{-j}=d_j^*$ and the real amplitudes obey $x_{-j}=x_j$. Hence \cref{eq:mfamily} is real and affine in $\bs x$. With an auxiliary variable $\varepsilon$, the sampled heating constraint becomes $-\varepsilon\le h_N(e)\le\varepsilon$ at each $e\in\mathcal E$, and minimising $\varepsilon$ minimises the sampled leakage; the cooling floor is affine. Splitting each $x_j$ into its positive and negative parts makes $\norm{\bs c}_1=\sum_j|d_j||x_j|\le\Lambda_{\max}$ linear, giving a linear program~\cite{BoydVandenberghe2004}. After reference coordinates with negligible amplitude are fixed to zero as in \cref{sec:theory}, every remaining $d_j$ is nonzero. The coefficient budget therefore bounds every amplitude, and its intersection with the closed band constraints is compact. The assumed nonempty feasible set and continuity of the sampled objective give a global minimiser.
    
    For the support stage, the coefficient budget gives $|d_j||x_j|\le\Lambda_{\max}$, so $M_j=\Lambda_{\max}/|d_j|$ is a valid bound in $|x_j|\le M_jz_j$, with paired coordinates sharing $x_j$ and $z_j$. The displacement cap is linear after the same splitting. Each fixed binary vector leaves a closed subset of the same compact coefficient set, and there are finitely many binary vectors. Since $\eta_{\sup}\ge\eta(\bs c_{\rm spec},\bs d)$, the first-stage minimiser is feasible with its nonzero branches active for every $\xi\ge0$. A support therefore attains the least cost. The resulting program is mixed-integer linear~\cite{Wolsey1998}; every admissible coefficient vector can be represented with exactly its active-branch cost $\sum_jW_jz_j$. Its optimum is global within the stated family and sampled constraints.
\end{proof}

Whole-band feasibility follows from a derivative bound. For any coefficient vector $\bs q$, differentiating \cref{eq:mjump} twice gives
\begin{equation}
    |h_N''(e;\bs q)|\le M_2(\bs q)\coloneqq\sum_j|q_j|t_j^2,
    \qquad
    |h_N(e;\bs q)-\mathcal I_h[h_N](e;\bs q)|\le\frac{M_2(\bs q)h^2}{8},
    \label{eq:interpolation}
\end{equation}
where $\mathcal I_h[h_N]$ is the linear interpolant on an interval of width $h$. For the real paired response, the interpolation remainder bounds the heating magnitude above the larger endpoint magnitude. Applied to $\bs q=\bs c-\bs d$, it bounds the cooling gain below the smaller endpoint gain. These inequalities certify feasibility between samples; the global-optimality statement of \cref{thm:twostage} concerns the sampled program.

The certificate in \cref{fig:certificate} uses $\norm{H}=1$, $\Delta=0.2$, $W=2$, and the $81$ trapezoidal nodes $t_j=j\pi/5$, $|j|\le40$, so $T=8\pi$. The Gevrey vector from \cref{eq:window} has $\norm{\bs d}_1=1.4709191989\ldots$ and $h_{\eref,N}(0.2)=0.0023073784\ldots$. Reweighting keeps this norm budget and imposes the sampled floor $h_{\eref,N}(-e)+2\times10^{-4}$. The solver reports objective $1.978\times10^{-5}$; the returned vector has sampled heating maximum $1.9863\times10^{-5}$ and minimum cooling gain $1.99974\times10^{-4}$ within the feasibility tolerance.

On $20{,}000$ equally spaced verification energies, $h=1.8/19{,}999$. The returned vector has $M_2(\bs c)\approx16.68$ and $M_2(\bs c-\bs d)\approx16.06$, giving interpolation remainders of approximately $1.7\times10^{-8}$ and $1.6\times10^{-8}$. Together with the sampled extrema, these give a whole-band cooling gain above $1.9995\times10^{-4}$ and
\begin{equation}
    \sup_{e\in[0.2,2]}|h_{\rm opt}(e)|<1.9881\times10^{-5},
    \qquad
    |h_{\eref,N}(0.2)|>2.3073\times10^{-3}.
    \label{eq:certnumbers}
\end{equation}
The sampled extrema and their remainders give a leakage reduction of $116.06$ at the same nodes, maximum time and norm budget. The certified vector has displacement about $1.02$.

The perturbation argument uses a mixing pair $(\lambda,\kappa)$, with $\lambda>0$ and $\kappa\ge1$, on a common invariant subspace of the working space:
\begin{equation}
\tnorm{e^{\calL t}X}\le\kappa e^{-\lambda t}\tnorm X
\quad(t\ge0)
\label{eq:mixingpair}
\end{equation}
for every traceless Hermitian $X$ supported on that subspace. Every generator compared in \cref{prop:transfer,thm:stationary} preserves it; uniqueness and convergence are asserted there.

\begin{proof}[Proof of \cref{prop:transfer}]
Unitary invariance and $\norm{A_a}\le1$ give
\begin{equation}
K_a(\bs c)-K_a(\bs d)=\sum_j(c_j-d_j)e^{iHt_j}A_ae^{-iHt_j},
\qquad
\norm{K_a(\bs c)-K_a(\bs d)}\le\eta(\bs c,\bs d).
\label{eq:jumpdist}
\end{equation}
Combining this with \cref{eq:dissdiff} and $\norm{K_a(\bs q)}\le\norm{\bs q}_1$ proves the generator bound \cref{eq:gendist}. Let $u(t)=\tnorm{e^{\calL(\bs c)t}X}$. Duhamel's formula, in the generator-comparison form of Ref.~\cite{Cubitt2015}, and the reference mixing pair give
\begin{equation}
u(t)\le\kappa_de^{-\lambda_dt}\tnorm X
+\kappa_d\delta_{c,d}\int_0^t e^{-\lambda_d(t-s)}u(s)\,\mathrm ds.
\label{eq:gronwall}
\end{equation}
Gr\"onwall's inequality gives the positive rate $\lambda_d-\kappa_d\delta_{c,d}$ under the stated hypothesis, and uniqueness follows from contraction. On the traceless Hermitian subspace, the inverse $-\int_0^\infty e^{\calL(\bs d)t}\,\mathrm dt$ of $\calL(\bs d)$ has norm at most $\kappa_d/\lambda_d$. Applying it to $\calL(\bs d)(\sigma_c-\sigma_d)=-[\calL(\bs c)-\calL(\bs d)]\sigma_c$ gives $\tnorm{\sigma_c-\sigma_d}\le\kappa_d\delta_{c,d}/\lambda_d$, completing \cref{eq:ratetransfer}.
\end{proof}

The mixing margin and linear displacement budget are \cref{eq:margin,eq:linearcapmain}. The numerical choice $g_d=0.5$ is an assumed scenario, equivalent to $\lambda_d/\kappa_d=0.5m$; the models have no proved mixing estimate establishing that value.

\section{Observation 2: light-cone patches}
\label[appendix]{app:obs2}
\label[appendix]{sec:locality}

Adapting the neighbourhood construction of Ref.~\cite{Hahn2026}, which fixes one radius for the whole jump, we assign branch $j$ its own nonnegative integer radius $R_j$ about the coupling support $S_a=B_{r_S}(a)$. The branch Hamiltonian contains the terms wholly inside $B_{R_j}(S_a)$, and the localised jump uses the selected coefficients $c_j^\star$:
\begin{equation}
    H_{a,R_j} \;\coloneqq\; \sum_{X \subseteq B_{R_j}(S_a)} h_X,
    \qquad
    K^{(\bs{R})}_{a,N} \;\coloneqq\; \sum_{j} c_j^\star\, e^{iH_{a,R_j} t_j} A_a e^{-iH_{a,R_j} t_j},
    \label{eq:localjump}
\end{equation}
where $\bs{R}=(R_j)$ and $R=\max_jR_j$. Replacing each jump in \cref{eq:lindblad_K} gives $\calL_{N,\bs{R}}$, with coherent evolution under the full $H$. Both jumps have norm at most $\norm{\bs{c}^\star}_1$.

\begin{proposition}[Branch-adaptive locality; formal version of \cref{prop:locality}]
\label{prop:locality-formal}
Assume \cref{ass:LR} and growth $V(r+r_S)\le C_\nu e^{\nu r}$ for some $0<\nu\le\mu/2$. For balls $B_{R_j}(S_a)$ around coupling supports $S_a=B_{r_S}(a)$, there is a constant $C_4>0$, depending only on $(k,\Delta_G,J,V,r_S)$ and the fixed locality constants, such that, for any radius vector $\bs R$, the \emph{spatial error}
\begin{equation}
\esp(\bs{R}; \bs{t}, \bs{c}^\star) \;\coloneqq\; \sum_{j} \bigl| c_j^\star \bigr| \min\Bigl\{ 2,\; C_4 |t_j| e^{\mu v_\LR |t_j|} e^{-\frac{\mu}{2}(R_j - 1)} \Bigr\}
\label{eq:espdef}
\end{equation}
obeys
\begin{equation}
\bigl\lVert K_{a,N} - K^{(\bs{R})}_{a,N} \bigr\rVert \;\le\; \esp(\bs{R}; \bs{t}, \bs{c}^\star),
\qquad
\delta_{N,\bs{R}} \;\coloneqq\; \indnorm{\calL_{N,\bs{R}} - \calL_N} \;\le\; 4 m \norm{\bs{c}^\star}_1\, \esp,
\label{eq:jumperror}
\end{equation}
where $\indnorm{\cdot}$ is the induced trace norm. The uniform choice $R_j = R$ is bounded by
\begin{equation}
\esp^{\mathrm{unif}}(R; T, \norm{\bs{c}^\star}_1) \;\le\; \norm{\bs{c}^\star}_1 \min\Bigl\{ 2,\; C_4 T e^{\mu v_\LR T} e^{-\frac{\mu}{2}(R - 1)} \Bigr\}.
\label{eq:unif}
\end{equation}
For padding $\ell \ge 0$, choose
\begin{equation}
R_j(\ell) \;\coloneqq\;
\begin{cases}
1, & t_j = 0,\\[2pt]
1 + \lceil 2 v_\LR |t_j| + \ell \rceil, & t_j \neq 0,
\end{cases}
\qquad
R \coloneqq 1 + \lceil 2 v_\LR T + \ell \rceil.
\label{eq:adaptive-formal}
\end{equation}
Writing $M_1(\bs{c}^\star, \bs{t}) \coloneqq \sum_j |c_j^\star|\,|t_j|$ for the first absolute time moment, this schedule gives
\begin{equation}
\esp(\bs{R}(\ell); \bs{t}, \bs{c}^\star) \;\le\; C_4\, M_1(\bs{c}^\star, \bs{t})\, e^{-\mu \ell / 2}.
\label{eq:adaptivebound}
\end{equation}
\end{proposition}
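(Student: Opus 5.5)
The plan is to prove the single-branch estimate first and then sum over branches. For one branch $j$, write $U(t)=e^{iHt}$ and $U_R(t)=e^{iH_{a,R}t}$, and compare $U(t)A_aU(t)^\dagger$ with $U_R(t)A_aU_R(t)^\dagger$ by the standard interpolation (Duhamel) argument. Define $F(s)=U(s)U_R(t-s)A_aU_R(t-s)^\dagger U(s)^\dagger$ for $s\in[0,t]$. Differentiating gives $F'(s)=iU(s)\bigl[H-H_{a,R},\,U_R(t-s)A_aU_R(t-s)^\dagger\bigr]U(s)^\dagger$. Unitary invariance then gives
\[
\bigl\lVert e^{iHt}A_ae^{-iHt}-e^{iH_{a,R}t}A_ae^{-iH_{a,R}t}\bigr\rVert\le\int_0^{|t|}\sum_{X\not\subseteq B_R(S_a)}\bigl\lVert[h_X,\,e^{iH_{a,R}u}A_ae^{-iH_{a,R}u}]\bigr\rVert\,\mathrm du .
\]
Only terms $h_X$ not wholly inside the patch contribute. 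Every such $X$ contains a vertex outside $B_R(S_a)$. Because hyperedges are cliques in the shortest-path metric, every vertex of $X$ lies within distance $1$ of that vertex. Hence $\dist(X,S_a)\ge R$, or at least $R-1$, which is where the $R_j-1$ in the exponent comes from.

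Next I apply \cref{ass:LR} with $Y=B_R(S_a)$, $O_X=A_a$ on $S_a\subseteq Y$ and $O_Z=h_X$. The assumption holds uniformly in $Y$ and allows $Z$ to cross the boundary, so each commutator is at most $C_\LR J\,\min\{|S_a|,k\}\,e^{-\mu(\dist(X,S_a)-v_\LR u)}$. I then organise the boundary terms into distance shells $r\ge R-1$ around $S_a$. The number of hyperedges at distance $r$ is at most $\Delta_G|B_{r+1}(S_a)|\le\Delta_G V(r+1+r_S)$, which is at most $\Delta_G C_\nu e^{\nu(r+1)}$ with $\nu\le\mu/2$. The shell sum $\sum_{r\ge R-1}e^{\nu r}e^{-\mu r}$ is then bounded by a constant times $e^{-\frac{\mu}{2}(R-1)}$. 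This summation is the step needing the most care, and it is exactly why the exponent is $\mu/2$ rather than $\mu$: half the decay is spent absorbing the growth. After summing, I bound $e^{\mu v_\LR u}\le e^{\mu v_\LR|t|}$ inside the $u$-integral, which yields the factor $|t|e^{\mu v_\LR|t|}$. All constants $C_\LR,J,k,\Delta_G,C_\nu,|S_a|\le V(r_S)$ and $e^{\nu}$ are collected into $C_4$.

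The trivial alternative bound $2$ holds because each conjugated coupling has norm at most $1$. The two sides agree exactly when $t_j=0$.

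To assemble the jump bound, I use the triangle inequality over branches weighted by $|c_j^\star|$, which gives $\lVert K_{a,N}-K^{(\bs R)}_{a,N}\rVert\le\esp$. For the generator, the coherent parts coincide, and both jumps have norm at most $\norm{\bs c^\star}_1$. Applying \cref{eq:dissdiff} to each of the $m$ dissipators therefore gives $\delta_{N,\bs R}\le 2m(2\norm{\bs c^\star}_1)\esp=4m\norm{\bs c^\star}_1\esp$.

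The uniform bound \cref{eq:unif} follows by monotonicity in $|t_j|\le T$ together with $\sum_j|c_j^\star|=\norm{\bs c^\star}_1$. For the padded schedule, $R_j-1\ge 2v_\LR|t_j|+\ell$, so $e^{\mu v_\LR|t_j|}e^{-\frac{\mu}{2}(R_j-1)}\le e^{-\mu\ell/2}$. Dropping the minimum and summing then gives $C_4M_1e^{-\mu\ell/2}$; the zero-time branch contributes nothing.
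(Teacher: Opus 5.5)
Your proposal is correct and follows the paper's proof essentially step for step. Both use the Duhamel comparison with the restricted evolution inside the commutator, the clique argument giving $\dist(X,S_a)\ge R$ for omitted terms, and the uniform Lieb--Robinson bound applied to boundary-straddling $h_X$. Both then sum over distance shells at rate $\mu-\nu\ge\mu/2$ and finish with the triangle inequality over branches and the dissipator estimate \cref{eq:dissdiff}. Two small points affect only $C_4$: your per-term constant $C_\LR J\min\{|S_a|,k\}$ is slightly tighter than the paper's $2JC_\LR|X||S_a|$, and for $t_j<0$ the restricted evolution inside the commutator should carry the sign $\operatorname{sgn}(t_j)$.
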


\begin{proof}[Proof of \cref{prop:locality-formal}]
Set $A_a(t)=e^{iHt}A_ae^{-iHt}$ and $A_a^R(t)=e^{iH_{a,R}t}A_ae^{-iH_{a,R}t}$. The Heisenberg comparison of Ref.~\cite{HastingsKoma2006} gives, by Duhamel's formula,
\begin{equation}
\bigl\lVert A_a(t) - A^R_a(t) \bigr\rVert \;\le\; \int_0^{|t|} \sum_{X \nsubseteq B_R(S_a)} \bigl\lVert [h_X, A^R_a(\operatorname{sgn}(t)\,s)] \bigr\rVert \, \mathrm{d}s.
\label{eq:duhamel}
\end{equation}
The sum includes boundary-straddling terms. An omitted hyperedge has a vertex at distance at least $R+1$ from $S_a$, so $\dist(X,S_a)\ge R$ because its vertices are pairwise at distance one. For lattice distance and interaction range $r_0$, the lower bound is $R-r_0+1$, with the fixed shift absorbed into $C_4$. At distance $r$, a term has a vertex in $B_{r+r_S}(a)$ for fixed $a\in S_a$, giving at most $\Delta_G V(r+r_S)$ terms. Their norms and sizes obey $\norm{h_X}\le J$ and $|X|\le k$. The restricted Hamiltonian has the same bounds on degree, hyperedge size and term norm, so the uniform \cref{ass:LR} gives $\norm{[h_X,A_a^R(s)]}\le 2J\norm{A_a}C_\LR|X||S_a|e^{-\mu(r-v_\LR s)}$. For $V(r+r_S)\le C_\nu e^{\nu r}$ with $0<\nu\le\mu/2$, the shell sum decays at rate $\mu-\nu\ge\mu/2$; polynomial growth admits every $\nu>0$ with a suitable $C_\nu$. Integrating over $s\in[0,|t|]$ and using $\norm{A_a}\le1$ yields
\begin{equation}
\bigl\lVert A_a(t) - A^R_a(t) \bigr\rVert \;\le\; \min\Bigl\{ 2,\; C_4 |t| e^{\mu v_\LR |t|} e^{-\mu(R - 1)/2} \Bigr\},
\label{eq:branchtail}
\end{equation}
where the shift to $R-1$ covers the lattice convention, and $C_4$ depends only on $(k,\Delta_G,J,V,r_S)$ and the fixed locality constants. For faster growth, $\mu/2<\nu<\mu$, the tail becomes $e^{\mu v_\LR|t|-(\mu-\nu)(R-1)}$ and every radius schedule must use speed $\mu v_\LR/(\mu-\nu)$ in place of $2v_\LR$.

Weighting \cref{eq:branchtail} by $|c_j^\star|$ and summing proves the jump bound. The dissipator estimate \cref{eq:dissdiff} and the two jump norms bounded by $\norm{\bs{c}^\star}_1$ give the generator bound in \cref{eq:jumperror}. Taking $|t_j|\le T$ gives \cref{eq:unif}; the adaptive radius cancels $e^{\mu v_\LR|t_j|}$ branch by branch, giving \cref{eq:adaptivebound}.
\end{proof}

\begin{proposition}[Graded ladder digits]
\label{prop:walkradii}
    Under the assumptions of \cref{prop:locality-formal}, write $s_j=\operatorname{sgn}(t_j)$ for $t_j\ne0$. Compile branch $j$ as $V_jA_aV_j^\dagger$, where $V_j=U_{p_j}\cdots U_1$ applies $U_p=e^{is_jH_{a,\rho_p}\theta_p}$ in increasing $p$, with $\theta_p>0$ and $\sum_p\theta_p=|t_j|$. Let $u_p=\sum_{p'\le p}\theta_{p'}$ be the elapsed time through piece $p$. The zero-time branch uses the identity and contributes $c_0^\star A_a$. With the same constant $C_4$,
    \begin{equation}
        \bigl\lVert A_a(t_j) - V_j A_a V_j^\dagger \bigr\rVert
        \;\le\;
        \min\biggl\{ 2,\; \sum_{p} C_4\, \theta_p\, e^{\mu v_\LR u_p}\, e^{-\frac{\mu}{2}(\rho_p - 1)} \biggr\} .
        \label{eq:worddefect}
    \end{equation}
    Giving a digit of cumulative weight $u$ the radius $1+\lceil2v_\LR u\tau+\ell\rceil$ preserves \cref{eq:adaptivebound}, since every branch invoking it has elapsed time at most $u\tau$. Whether graded radii also meet the sharper bound for one uniform radius at the final elapsed time, and how their actual errors compare, must be checked separately, since the remaining evolution time depends on the branch.
\end{proposition}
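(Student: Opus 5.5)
The plan is to telescope the compiled conjugation against the exact Heisenberg evolution, one piece at a time, and to bound each step with the single-piece Duhamel estimate from the proof of \cref{prop:locality-formal}. Take a branch with sign $s_j$. Write $W_p\coloneqq e^{is_jH\theta_p}$ for the exact evolution over piece $p$, so that $e^{iHt_j}=W_{p_j}\cdots W_1$. Define the hybrid operators
\begin{equation}
    X_q \;\coloneqq\; W_{p_j}\cdots W_{q+1}\,U_q\cdots U_1\,A_a\,U_1^\dagger\cdots U_q^\dagger\,W_{q+1}^\dagger\cdots W_{p_j}^\dagger,
\end{equation}
for $q=0,\ldots,p_j$. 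Then $X_0=A_a(t_j)$ and $X_{p_j}=V_jA_aV_j^\dagger$, and by unitary invariance of the operator norm
\begin{equation}
    \norm{X_q-X_{q-1}}=\norm{U_qYU_q^\dagger-W_qYW_q^\dagger},
    \qquad
    Y\coloneqq U_{q-1}\cdots U_1A_aU_1^\dagger\cdots U_{q-1}^\dagger .
\end{equation}
The triangle inequality then reduces \cref{eq:worddefect} to a bound on each one-piece defect. The trivial bound of $2$ holds because both conjugates of $A_a$ have norm at most one.

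To bound the single-piece defect, I will apply Duhamel's formula over $s\in[0,\theta_q]$ in the same form as \cref{eq:duhamel}. The integrand is the commutator of the omitted terms $h_X$, $X\nsubseteq B_{\rho_q}(S_a)$, with the operator $Y$ evolved further under the patch Hamiltonian $H_{a,\rho_q}$ for time $s$. That operator is $A_a$ evolved by a product of restricted-Hamiltonian evolutions of total duration $u_{q-1}+s\le u_q$. The step I expect to be the main obstacle is that \cref{ass:LR} is stated for a single evolution under one restricted Hamiltonian $H_Y$, whereas here the evolution is a time-ordered product of evolutions under different patch Hamiltonians. I will handle this by observing that this product is the evolution under the piecewise-constant time-dependent Hamiltonian $H_{a,\rho_{p(s')}}$. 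Every member of this family has the same bounds on degree, hyperedge size and term norm. The standard Lieb--Robinson proof (the commutator differential inequality and its iteration) applies verbatim to time-dependent Hamiltonians with uniformly bounded local terms. It gives the same form as \cref{eq:LR} with $|t|$ replaced by the total elapsed time, which here is at most $u_q$. If one insists on using \cref{ass:LR} only as stated, I would instead take this uniformity over time-dependent restrictions as the intended reading of the assumption, and record that reading explicitly.

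With this bound, the shell sum over omitted terms runs exactly as in the proof of \cref{prop:locality-formal}. The distance is at least $\rho_q-1$, and growth is controlled by $V(r+r_S)\le C_\nu e^{\nu r}$ with $\nu\le\mu/2$. This bounds the integrand by $C_4e^{\mu v_\LR u_q}e^{-\mu(\rho_q-1)/2}$, with the same constant $C_4$, uniformly in $s$. Integrating over the piece gives the factor $\theta_q$. Summing over $q$ and taking the minimum with $2$ yields \cref{eq:worddefect}.

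For the ladder statement, take a branch that uses digits $c_1<\cdots<c_r$. Its elapsed time after its piece from digit $c$ is at most $u_c\tau$, because it uses only a subset of the digits $1,\ldots,c$. With $\rho_c=1+\lceil2v_\LR u_c\tau+\ell\rceil$, each term of \cref{eq:worddefect} satisfies
\begin{equation}
    C_4\,\theta_p\,e^{\mu v_\LR u_p}\,e^{-\frac{\mu}{2}(\rho_p-1)}\;\le\; C_4\,\theta_p\,e^{\mu v_\LR u_c\tau}\,e^{-\mu v_\LR u_c\tau}\,e^{-\mu\ell/2}\;=\;C_4\,\theta_p\,e^{-\mu\ell/2}.
\end{equation}
Summing over the pieces gives $C_4|t_j|e^{-\mu\ell/2}$, since $\sum_p\theta_p=|t_j|$. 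Weighting by $|c_j^\star|$ and summing over $j$ recovers \cref{eq:adaptivebound}.
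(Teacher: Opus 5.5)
Your telescoping, shell sum and ladder arithmetic are all correct, but as written the argument does not prove the proposition under the assumptions of \cref{prop:locality-formal}. The obstacle you flag is real, and your choice of telescoping order creates it. Your hybrids $X_q$ put the compiled pieces innermost, so the one-piece defect $U_qYU_q^\dagger-W_qYW_q^\dagger$ acts on the patch-evolved operator $Y=U_{q-1}\cdots U_1A_aU_1^\dagger\cdots U_{q-1}^\dagger$. Whichever way you set up Duhamel's formula, the commutator then contains an operator evolved by a time-ordered product of different patch Hamiltonians.

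\Cref{ass:LR} is an abstract hypothesis, with fixed constants $(C_\LR,\mu,v_\LR)$, about evolution under a single restricted Hamiltonian $H_Y$. It does not imply a bound for such products. Re-running the standard Lieb--Robinson proof from the interaction bounds gives that proof's constants, not necessarily the assumed ones; the chain numerics, for instance, use a model-specific irreducible-path velocity. So invoking a time-dependent bound, or reading the assumption as including one, strengthens the hypothesis. It also undercuts the claim that the same $C_4$ works.

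The fix is to telescope in the opposite order, as the paper does. With $B_p=A_a(s_ju_p)$ and $\tilde A_p=U_p\tilde A_{p-1}U_p^\dagger$, write
\[
\tilde A_p-B_p=U_p(\tilde A_{p-1}-B_{p-1})U_p^\dagger+\bigl(U_pB_{p-1}U_p^\dagger-e^{is_jH\theta_p}B_{p-1}e^{-is_jH\theta_p}\bigr).
\]
The one-piece defect now acts on the exact operator $B_{p-1}$. Interpolate so that the exact evolution runs for the remaining time $\theta_p-s$. Duhamel's formula then bounds the defect by $\int_0^{\theta_p}\sum_{X\nsubseteq B_{\rho_p}(S_a)}\norm{[h_X,A_a(s_j(u_p-s))]}\,\mathrm ds$. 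The operator in each commutator is a genuine Heisenberg evolution under the full $H$, for time of magnitude at most $u_p$. \Cref{ass:LR} with $Y=\mathcal V$ applies to it directly. Your shell sum then yields $C_4\theta_pe^{\mu v_\LR u_p}e^{-\mu(\rho_p-1)/2}$ with the same constant, and the rest of your proof, including the ladder step, goes through unchanged.
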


\begin{proof}
    Set $B_p=A_a(s_ju_p)$ and $\tilde A_p=U_p\tilde A_{p-1}U_p^\dagger$, with $B_0=\tilde A_0=A_a$. Then $B_{p_j}=A_a(t_j)$ and $\tilde A_{p_j}=V_jA_aV_j^\dagger$. Inserting $B_p=e^{is_jH\theta_p}B_{p-1}e^{-is_jH\theta_p}$ and using unitary invariance gives
    \begin{equation}
        \bigl\lVert \tilde{A}_p - B_p \bigr\rVert
        \;\le\; \bigl\lVert \tilde{A}_{p-1} - B_{p-1} \bigr\rVert
        \;+\; \bigl\lVert U_p B_{p-1} U_p^\dagger - e^{i s_j H\theta_p} B_{p-1} e^{-i s_j H\theta_p} \bigr\rVert .
        \label{eq:steptelescope}
    \end{equation}
    This comparison preserves factor order and requires no commutation between distinct patch Hamiltonians. Duhamel's formula bounds the second term by $\int_0^{\theta_p}\sum_{X\nsubseteq B_{\rho_p}(S_a)}\lVert[h_X,A_a(s_j(u_p-s))]\rVert\,\mathrm ds$. The exact operator inside the commutator has evolved for time of magnitude at most $u_p$, and \cref{ass:LR} applies to both time signs. The shell sum used in \cref{eq:branchtail} therefore gives step error $C_4\theta_p e^{\mu v_\LR u_p}e^{-\mu(\rho_p-1)/2}$: elapsed time sets the exponential and step duration its prefactor. Telescoping and the norm bound $2$ prove \cref{eq:worddefect}.
    
    For a shared digit of cumulative weight $u$, $u_p\le u\tau$ gives $e^{\mu v_\LR u_p}e^{-\mu(\rho_p-1)/2}\le e^{-\mu\ell/2}$. Summing durations gives branch error $C_4|t_j|e^{-\mu\ell/2}$, and weighting by $|c_j^\star|$ proves \cref{eq:adaptivebound}.
\end{proof}

The C--G row of \cref{tab:ablation} uses this schedule.

\section{Observation 3: convergence on patches}
\label[appendix]{app:obs3}
\label[appendix]{sec:stability}
The argument below separates the Lieb--Robinson tails, which the truncation introduces, from the contraction supplied by the mixing, the same device used in Ref.~\cite{Smid2026} to bound how quickly local observables equilibrate. The statement differs: there the generator is exact and the truncation is an artefact of the proof, whereas here the generator itself is truncated, by design, and the quantity bounded is the global stationary state in trace norm rather than a local observable.

Assume the global generator has mixing pair $(\lambda_N,\kappa_N)$ as in \cref{eq:mixingpair}. Every compared generator preserves the same invariant subspace containing the initial and target states; uniqueness and convergence below are asserted on this space. The argument applies to any radius vector $\bs R$, including the capped schedule $\bs{R}^*$ of \cref{thm:stationary}.

\begin{proof}[Proof of \cref{thm:stationary}]
The mixing estimate makes $\calL_N$ invertible on traceless Hermitian operators, with inverse of norm at most $\kappa_N/\lambda_N$, since $\tnorm{\calL_N^{-1}X}\le\int_0^\infty\tnorm{e^{\calL_Nt}X}\,\mathrm dt\le(\kappa_N/\lambda_N)\tnorm{X}$. Denote its stationary state by $\sigma_N$; every jump has $\norm{K_{a,N}}\le\norm{\bs{c}^\star}_1$. For $v_a=K_{a,N}\ket{\psi_0}$ and $w_a=K_{a,N}^\dagger v_a$, direct expansion gives
\begin{equation}
\calD[K_{a,N}](\Pi_0) = \ket{v_a}\!\bra{v_a} - \tfrac{1}{2}\bigl( \ket{w_a}\!\bra{\psi_0} + \ket{\psi_0}\!\bra{w_a} \bigr),
\qquad
\tnorm{\calD[K_{a,N}](\Pi_0)} \le \eup^2 + \norm{\bs{c}^\star}_1 \eup.
\label{eq:heating}
\end{equation}
Since $[H,\Pi_0]=0$, summing over $m$ jumps and applying the inverse yields the first term in \cref{eq:mainbound}. Applying it to
\begin{equation}
\calL_N(\sigma_{N,\bs{R}} - \sigma_N) \;=\; -\bigl( \calL_{N,\bs{R}} - \calL_N \bigr)\, \sigma_{N,\bs{R}}
\label{eq:spatialterm}
\end{equation}
yields the spatial term. The Duhamel--Gr\"{o}nwall argument of \cref{prop:transfer} gives mixing and uniqueness when $\kappa_N\delta_{N,\bs R}<\lambda_N$. Finally,
\begin{equation}
K_{a,N}\ket{\psi_0} = \sum_k h_N(E_k - E_0)\, \ket{\psi_k}\!\bra{\psi_k} A_a \ket{\psi_0},
\label{eq:raising}
\end{equation}
so orthogonality gives $\norm{K_{a,N}\ket{\psi_0}}^2=\sum_k|h_N(E_k-E_0)|^2|\bra{\psi_k}A_a\ket{\psi_0}|^2$. The $k=0$ term is $|h_N(0)|^2|\bra{\psi_0}A_a\ket{\psi_0}|^2$; every other allowed transition lies in the band. Using $\norm{A_a}\le1$ gives \cref{eq:eupbound}. This zero-frequency term contributes to the residual: even when $v_a=\alpha\ket{\psi_0}$, \cref{eq:heating} can carry coherences through $w_a=\alpha K_{a,N}^\dagger\ket{\psi_0}$ at first order in $\alpha$.
\end{proof}

Coefficient rounding and finite-precision conjugators in the circuit of \cref{app:frontier} give the unnormalised jump error
\begin{equation}
    \bigl\lVert K_{a,N}^{(\bs{R})} - \widetilde{K}_{a,N}^{(\bs{R})} \bigr\rVert \;\le\; \sum_j |\widetilde c_j - c_j^\star| + 2\sum_j |c_j^\star|\,\norm{\widetilde V_j - V_j} \;\le\; \zeta_{\mathrm{imp}},
    \label{eq:zetaimp}
\end{equation}
with the allowance $\zeta_{\mathrm{imp}}$ of \cref{eq:implementationmain} chosen independently of the cost model's $\eHS$ and coefficient precision $b_{\mathrm{coef}}$. Its coefficient term follows from rounding; its conjugator term needs a Hamiltonian-simulation circuit whose accuracy is proven, including the polynomial approximation of each evolution and the synthesis of its rotations. The implemented norm can reach $\norm{\bs{c}^\star}_1+\zeta_{\mathrm{imp}}$, so \cref{eq:dissdiff} gives extra generator distance $m(4\norm{\bs{c}^\star}_1+2\zeta_{\mathrm{imp}})\zeta_{\mathrm{imp}}$. With $\greq=2[\norm{\bs{c}^\star}_1+\norm{\bs d}_1]\eta(\bs c^\star,\bs d)$, the remaining-margin condition $4\norm{\bs{c}^\star}_1(\esp+\zeta_{\mathrm{imp}})+2\zeta_{\mathrm{imp}}^2<g_d-\greq$ ensures that the generator with the implemented jumps mixes and its stationary state obeys
\begin{equation}
\tnorm{\widetilde\sigma - \Pi_0} \;\le\; \frac{\eup^2 + \norm{\bs{c}^\star}_1\, \eup + 4 \norm{\bs{c}^\star}_1\bigl[\esp(\bs{R}; \bs{t}, \bs{c}^\star) + \zeta_{\mathrm{imp}}\bigr] + 2\zeta_{\mathrm{imp}}^2}{g_d - \greq}.
\label{eq:compilerbound}
\end{equation}
\Cref{tab:ablation} evaluates this state error bound from each row's declared inputs. A finite-time run of the outer loop additionally incurs convergence and channel-splitting errors. The compiled rows assume couplings with zero ground-state expectation. Such couplings, or a justified centring procedure, must be supplied for the Hamiltonian blocks used; otherwise \cref{eq:eupbound} includes $|h_N(0)|\,|\bra{\psi_0}A_a\ket{\psi_0}|$.

Centring changes the generator unless its coherent contribution from \cref{eq:gauge} is compensated. The centred generator has the equivalent descriptions $-i[H,\cdot]+\calD[K+\beta\ident]=-i[H+H_\beta,\cdot]+\calD[K]$, hence the same stationary states in both. In the centred-jump description, $[H,\Pi_0]=0$ remains available, and the jump norm, target residual and mixing pair are evaluated for the centred generator. In the other description, the coherent residual $-i[H_\beta,\Pi_0]$ must be retained and can cancel the dissipator residual. Thus $[H_\beta,\Pi_0]\ne0$ alone does not establish nonstationarity. The theorem must be applied with the complete residual of a single description. The Hubbard comparison in \cref{sec:impact} measures the uncentred and centred generators; a bound for the centred dynamics would require its own mixing estimate and residuals.

For $\varepsilon\in(0,1]$, assume $\kappa_N\delta_{N,\bs R}\le\lambda_N/2$ and $\tnorm{\sigma_{N,\bs R}-\Pi_0}\le\varepsilon/2$ from \cref{thm:stationary}. Duhamel--Gr\"{o}nwall mixing then gives the run time
\begin{equation}
t_{\mathrm{stop}} \;=\; \frac{2}{\lambda_N} \log \frac{4\kappa_N}{\varepsilon}
\label{eq:tstop}
\end{equation}
for trace-norm error at most $\varepsilon$ and ground-state population at least $1-\varepsilon/2$. The energy test in \cref{eq:energycertmain} can certify a state without knowing this run time. For a pure target, $\tnorm{\rho-\Pi_0}\le2\sqrt{1-\Tr[\Pi_0\rho]}$; energy excess at most $\Delta\varepsilon$ therefore certifies population at least $1-\varepsilon$ and trace-norm error at most $2\sqrt\varepsilon$. Trace-norm tolerance $\varepsilon$ requires excess at most $\Delta\varepsilon^2/4$.

The population certificate also holds for a degenerate low-energy manifold. On the working space, with identity $\ident$, sector ground energy $E_0$ and spectral projector $\Pi_{<\Delta}$ onto $[E_0,E_0+\Delta)$, the inequality $H-E_0\ident\succeq\Delta(\ident-\Pi_{<\Delta})$ gives
\begin{equation}
1 - \Tr[\Pi_{<\Delta}\, \rho] \;\le\; \frac{\Tr[H\rho] - E_0}{\Delta},
\label{eq:energycert}
\end{equation}
which reduces to \cref{eq:energycertmain} under \cref{ass:gap}, where $\Pi_{<\Delta}=\Pi_0$. For the test to accept eventually, the energy and gap bounds must be accurate enough. Let $E(t_r)$ be the true energy at test $r$. When all the confidence bounds hold, the test accepts whenever the true excess $E(t_r)-E_0$, plus the overestimate $\overline E_r-E(t_r)$ and the underestimate $E_0-\underline E_0$, is at most $\varepsilon\underline\Delta$. Eventual acceptance follows if the stationary energy excess $\Tr[H\sigma_{N,\bs R}]-E_0$ lies below the uncertainty-adjusted threshold by a fixed positive margin for all sufficiently late tests. The stationary-state bound in \cref{eq:compilerbound} can control this excess under its stated hypotheses. If the stationary excess exceeds the threshold, convergence supplies no guarantee of acceptance.

\section{Fault-tolerant cost model and compiled resource estimates}
\label[appendix]{app:frontier}
\label[appendix]{sec:frontier}

The jump is block-encoded by the standard linear-combination-of-unitaries circuit~\cite{Babbush2018}. PREP loads the $N$ candidate labels with amplitudes $\sqrt{|c_j^\star|/\norm{\bs{c}^\star}_1}$; SELECT applies the phase and the conjugated coupling of the selected branch, skipping the simulation when $z_j=0$; and PREP$^\dagger$ undoes the preparation. One PREP--SELECT--PREP$^\dagger$ cycle encodes the jump divided by $\norm{\bs{c}^\star}_1$. Pauli couplings are unitary; a hopping coupling combines two Pauli strings per spin and needs its own preparation, charged separately.

\begin{table}[!b]
\caption{Per-jump block-encoding costs on the $N=41$, $T=6\pi$, $\Delta=0.2$ grid. Gevrey retains all $41$ labels; match keeps the shortest $\MatchActiveTwoD$ branches at the reference leakage, with maximum time $\MatchTeffTwoD$ and grading where the radius permits. The lower panel compares dense TFIM with match patches at $R^*=3$. ``Total'' multiplies the per-jump cost by $L^2$ jumps and $\StepsAssumed$ assumed applications. These are cost estimates for the primitive at fixed normalised gap; neither the number of applications nor the accuracy of two-dimensional patches has been checked by simulation.}
\label{tab:resources}
\begin{ruledtabular}
\small\setlength{\tabcolsep}{3pt}
\begin{tabular}{lccccc}
 & \multicolumn{2}{c}{TFIM} & \multicolumn{2}{c}{Heisenberg} & \\
\cline{2-3}\cline{4-5}
$R^*$ (patch qubits) & Gevrey & match & Gevrey & match & \\
\colrule
\ResTableRows
\colrule
\multicolumn{6}{l}{TFIM: full-lattice Gevrey versus match patches at $R^*=3$}\\
lattice & jump, Gevrey & jump, ours & total, Gevrey & total, ours & ratio \\
\ResTotalRows
\end{tabular}
\end{ruledtabular}
\end{table}

The enclosing square of side $q=2R+1$ bounds the metric ball's $2R^2+2R+1$ sites. Using the Ising and Heisenberg term counts $L_{t}$, coefficient norms $\lambda_H$ and the rounded linear fits for $C_{\rm prep}$ of \cref{sec:numerics}, the query cost in logical Toffolis is
\begin{equation}
    C_H(R)=L_{t}(R)-2+2C_{\rm prep}(R)+2\lceil\log_2L_{t}(R)\rceil.
    \label{eq:cwalk}
\end{equation}
The count includes SELECT, two PREPs and reflection, with $q^2+60$ Hamiltonian work qubits. Set $b_N\coloneqq\lceil\log_2N\rceil$, coefficient precision $b_{\rm coef}=16$, $\eHS=10^{-3}$ and $\ell_N\coloneqq\lceil\log_2(N/\eHS)\rceil$. A separately generated branch has forward/backward query count and cost
\begin{equation}
    Q_j\coloneqq2\bigl[\lceil \lambda_H(R_j)|t_j|\rceil+\ell_N\bigr],\qquad
    W_j\coloneqq Q_jC_H(R_j).
    \label{eq:qj}
\end{equation}
This query-count model determines the cost estimate~\cite{LowChuang2019}; implementation accuracy is supplied independently through $\zeta_{\rm imp}$. With the accounting multiplier $A_{\rm norm}=\max\{1,\lceil\norm{\bs{c}^\star}_1\rceil\}$, the separate-branch cost is
\begin{equation}
C_{\rm BE}\coloneqq A_{\rm norm}\left[\sum_{j:t_j\ne0}z_jW_j+(N-2)+2b_N\right].
\label{eq:cbe}
\end{equation}
The fixed-$N$ overhead charges selection and reflection. This branch-by-branch cost model underlies \cref{thm:twostage}; reported resources use the compact ladder. Compared filters share $A_{\rm norm}$, which cancels in their ratios.

The ladder uses ordered positive weights $(w_1,\ldots,w_b)$ of minimum width $b=\lceil\log_2(K_{\rm act}+1)\rceil$, summing to $K_{\rm act}$ and representing every magnitude by subset sums. Evolving under $Z_s\otimes H_{a,\rho_c}$, with $Z_s$ acting on the sign qubit, runs either direction without extra Toffolis. Conjugating the coupling uses each weight's controlled call twice:
\begin{equation}
C^{\rm lad}_{\rm BE}\coloneqq A_{\rm norm}\left[2\sum_{c=1}^{b}
\bigl(\lceil\lambda_H(\rho_c)w_c\tau\rceil+\ell_N\bigr)C_H(\rho_c)+(N-2)+2b_N\right].
\label{eq:cladder}
\end{equation}
With uniform radii every digit uses the longest active branch's radius; grading uses \cref{prop:walkradii} at each cumulative weight. Exhaustive search selects the cheapest complete list and order at the tested design points ($K_{\rm act}\le20$). The ladder simulates a total time $2K_{\rm act}\tau$ and pays the surcharge $\ell_N$ once per controlled evolution, $2b$ times in all, retaining $N$ in the label overhead.

Preparing the coefficients by coherent alias sampling, and undoing the preparation, together cost~\cite{Harrigan2024}
\begin{equation}
C_T^{\rm alias}=2A_{\rm norm}(4N+8b_{\rm coef}+19b_N-8)
\label{eq:alias}
\end{equation}
logical $T$ gates, with $A_{\rm norm}(4Q_{\rm lad}+N)$ arbitrary-angle rotations, $Q_{\rm lad}\coloneqq2\sum_{c=1}^b[\lceil\lambda_H(\rho_c)w_c\tau\rceil+\ell_N]$, and $3b_N+2b_{\rm coef}+1$ alias work qubits beyond the largest patch width. Controlled and uncontrolled queries receive the same charge. The outer loop, channel splitting, resets, energy estimation, rotation synthesis, magic-state factories and routing are not costed here. The normalisation $\mathcal D[K/\norm{\bs{c}^\star}_1]=\mathcal D[K]/\norm{\bs{c}^\star}_1^2$ adds a factor $\norm{\bs{c}^\star}_1^2$ to the outer loop's run time, separately from these primitive costs.

With $\lambda_H,C_H=O(n_p)$ on $n_p$ patch sites and $\norm{\bs{c}^\star}_1=O(1)$, the full-support ladder costs $O(n_p^2T+n_pb\ell_N+N)$. Independent branches replace $T$ by $T^2/\tau$ and $b$ by $N$; support reduction replaces $T$ by the longest active time. Dense, light-cone and fixed-radius compilations take $n_p=n$, $\min\{n,V(2v_\LR T+\ell+2)\}$ and $V(R^*)$ at a fixed maximum radius $R^*$, respectively. Polynomial volume growth in dimension $D$ and $T=O(\Delta^{-1})$ give leading costs $n^2\Delta^{-1}$, $\Delta^{-(2D+1)}$ and $\Delta^{-1}$; grading changes constants. The Gevrey one-norm $\norm{\bs d}_1=\pi^{-1}\log(1/\Delta)+O(1)$ adds logarithmic factors through $A_{\rm norm}$ and, quadratically, through the outer loop's normalisation. Fixed-radius preparation requires small truncation error and a mixing estimate. Model-specific Lieb--Robinson bounds~\cite{WangHazzard2020} and geometric product formulas~\cite{ChildsSu2019} can change the costs. The velocity enters every radius linearly, while both the query count and the cost of each query grow with the patch area, so in two dimensions a tighter model-specific velocity can reduce the cost of a branch by up to the fourth power of the velocity ratio. The chain simulations use the generic irreducible-path velocity and the two-dimensional estimates simply set $v_\LR=1$, so substituting a model-specific velocity, or solving for it numerically, is the cheapest available improvement to the counts reported here.

The programs with whole-band (second-derivative) constraints use $N\in\{21,41\}$, $T\in\{2\pi,4\pi,6\pi,8\pi\}$, $f\in\{1,1.10,1.25\}$, band $[0.2,2]$, norm budget $f\norm{\bs d}_1$, reference cooling floor and $1200$ constraints. Displacement is capped at $0.8g_d/[2(1+f)\norm{\bs d}_1]$ with assumed margin $g_d=0.50$. Stage two uses $\xi=0.05$ and $\eta_{\sup}=\min\{\eta_{\rm budget},1.05\eta_{\rm spec}\}$, where $\eta_{\rm budget}$ is the displacement cap above and $\eta_{\rm spec}=\eta(\bs c_{\rm spec},\bs d)$, and minimises \cref{eq:cladder} over feasible symmetric truncations and weight orders. The coefficient programs are solved to a relative optimality gap of $10^{-8}$; ties are broken first by leakage and then by displacement, and the optimum need not be unique. Optimality applies to these specified programs and finite searches.

For $T=2\pi,4\pi,6\pi,8\pi$, respectively $5,4,3,3$ of the maximum radii $R^*\in\{16,32,64,96,128\}$ satisfy $\ell=R^*-1-2v_\LR T\ge0$. Combined with the two cost blocks and two supports, this gives $360$ cases. Full and shortest supports coincide for all $N=21$ cases and for $N=41$, $T\le4\pi$; counting those $144$ pairs once leaves $\NrowsTotal$ instances ($180$ full, $36$ shortened). The $\NrowsAlias$ cases at $N=21$, $T\ge6\pi$ cannot separate heating from cooling: the response has period $2\pi/\tau$, so $h_N(\pi/\tau)=h_N(-\pi/\tau)$, and $\pi/\tau$ lies inside the band. The remaining $\NrowsUsable$ instances use $C_4=\mu=v_\LR=1$ and require state error bounds of at most $2$. The archive records the checks; both cost blocks assume the mixing margin $g_d=0.50$.

In \cref{tab:resources}, $R^*=64$ covers the light cone, while $R^*=2$--$4$ give $25$--$81$ patch qubits and require direct convergence checks. Matched-leakage support reduction supplies a further factor $\SupportFactor{}$; global error targets can require radii growing as $\log m$.

The lattice rows fix $\Delta=0.2$; a fixed-coupling Hamiltonian rescaled to $[-1,1]$ has $\Delta=O(1/n)$. Physical estimates must scale gap, norms, velocity and time together (\cref{eq:energyscalingmain}). The assumed $\StepsAssumed$ applications reflect Ising-chain relaxation; Heisenberg takes several times longer. The number of calls made by a discrete-time simulation of the dynamics depends on the mixing, the normalisation and the channel splitting~\cite{DingChenLin2024}; for finite-range dissipation the simulation itself can be carried out with gate count near-linear in the space-time volume for sparsely located dissipation and $O((nt)^{4/3})$ up to polylogarithmic factors otherwise~\cite{Mizuta2026}. The logarithmic mixing time proved for weakly interacting systems~\cite{Zhan2025} would need the transfer argument discussed in \cref{sec:outro} before it applies to this generator. Coherent Hubbard and molecular-energy estimates~\cite{KanSymons2025,Lee2021,Georges2025}, and coherent estimates of other molecular observables~\cite{Steudtner2023}, concern different tasks and accounting.
\section{Numerical methods}
\label[appendix]{app:numerics}

\emph{Archive and models.} Release \texttt{numerical-2026-09-11}~\cite{FTDQENumerical2026} holds, for every table and figure, the coefficients, Hamiltonians, molecular coordinates, ladder digit assignments $\bs\chi(j)$, radii, settings, random-number-generator states and outputs. It keeps the feasibility checks, the optimality and support-minimality checks, the modelled costs and the dynamical validation separate; file hashes identify the data, while residuals, constraint checks and trajectory comparisons assess numerical correctness. Molecular inputs use PySCF/STO-3G, CASCI active spaces with frozen lowest orbitals where specified, and the neutral fixed-electron-number, $S_z=0$ spin-sector working space. Because the spin-resolved hoppings connect different total-spin sectors, triplet states set the gaps of the hydrogen systems. Hamiltonians are scaled to $[-1,1]$ on their working spaces; molecular Pauli counts and norms use Jordan--Wigner form, with Hubbard ordering $(i,\sigma)\mapsto2i+\sigma$.

\emph{Filter design.} Programs with second-derivative remainder constraints certify the bands in \cref{app:obs1,app:frontier}; per-model programs use sampled constraints and subsequent fine-grid evaluation. The Gevrey reference~\cite{DingChenLin2024} has $(\omega_{\rm w},\delta_{\rm w})=(2.5,0.5)$, $\tau=\pi/5$, requested $T_{\rm target}=5/\Delta$, nearest-integer $K=\operatorname{round}(T_{\rm target}/\tau)$ and realised $T=K\tau$. The sampled first stage uses cooling floor $h_{\eref,N}(-e)+2\times10^{-4}$, norm budget $\norm{\bs d}_1$ and displacement $\eta\le0.8g_d/(4\norm{\bs d}_1)$ at assumed $g_d=0.5$, on $1200$ energies, checked on $20{,}000$; above $1000$ nodes, constraint generation~\cite{Kelley1960} takes at most two rounds. The shortest feasible symmetric support is found by bisection (per-model) or enumeration (family), after which the displacement is minimised at fixed leakage. The per-model match programs drop the displacement cap, so their displacements exceed the budget of \cref{eq:linearcapmain} by factors of $2$ to $72$ and \cref{prop:transfer} does not cover them; their state accuracy rests on the simulations.

F$_2$ and H$_{10}$ are re-solved by adding the energies at which the constraints fail, with every constraint tightened by a margin of $5\times10^{-8}$. Acceptance requires the reference leakage cap and a $2\times10^{-4}$ cooling gain throughout the band, checked using $100{,}001$ samples, the local extrema of the response and the remainder $M_2h^2/8$ of \cref{eq:interpolation} on intervals of width $h$ (with $\bs c-\bs d$ in place of $\bs c$ for the gain), plus a $5\times10^{-11}$ floating-point allowance. Independent midpoint Taylor checks agree. The bounds are evaluated in ordinary floating point, without directed rounding. For H$_{10}$ the displacement is then minimised at fixed leakage; for F$_2$ it is not. For the next shorter symmetric support, the dual of the sampled program without the margin gives a lower bound on its leakage that lies above the cap, so the support is minimal.

\emph{Dynamics and diagnostics.} Jumps are evaluated in the eigenbasis of the Hamiltonian that generates them, $K_{kk'}=h_N(E'_k-E'_{k'})\bra{\psi'_k}A\ket{\psi'_{k'}}$, where $E'_k$ and $\ket{\psi'_k}$ are the eigenvalues and eigenvectors of that Hamiltonian; a patch Hamiltonian acts as the identity outside its patch.
Dynamics retains the full coherent Hamiltonian, unit dissipative rate and highest-energy initial state. Lattice dynamics are integrated numerically with time step $0.05$ (four/six sites) or $0.1$ (eight qubits), to times $60$ (four/six-site Ising), $40$ (eight-site Ising), $120$ (Heisenberg) or $100$ (Hubbard). For Hilbert-space dimension at most $64$, stationary states are computed by projecting onto the null space of the Lindbladian along its left null vectors. Molecular jumps are unit-norm spin-resolved hoppings; F$_2$ and Cl$_2$ are propagated to $t=160$ with a Krylov method that never forms the Lindbladian as a matrix; at that time $\tnorm{\calL[\rho]}$ and the change in population over the last five time units are both below $10^{-10}$, and both runs agree with step-by-step integration over the first ten time units. Lindbladian spectra are computed exactly; for Hubbard they are computed on the invariant ten-dimensional spin-singlet, pseudospin-singlet space, where a simple zero eigenvalue shows that each stationary state is unique. The Lindbladian gap bounds every mixing rate from above, and prefactors measured from sampled initial states bound the worst-case prefactor from below, so no mixing pair is established and the stationary-state bounds remain conditional. The archived nine-site test of graded digits uses $(J,h,\tau,K,R^*,\ell)=(1,0.7,0.2,5,3,0)$ and a central $X$ coupling; with weights $(2,1,2)$ at radii $(2,3,3)$, the error of branch $j=4$ is $1.24$ times that at uniform radius three. The random jump orders of \cref{tab:outerloop} come from NumPy's PCG64 generator initialised by \texttt{default\_rng(3)}, which draws the $60$ independent sites before the sweep permutations.

\end{document}